\newtheorem{problem}{Problem}
\newtheorem{definition}{Definition}
\newtheorem{assumption}{Assumption}
\newtheorem{theorem}{Theorem}
\newtheorem{lemma}{Lemma}
\newtheorem{remark}{Remark}
\newtheorem{corollary}{Corollary}
\newenvironment{proof}{\begin{trivlist}\item[] {\em Proof: }}{\hfill $\Box$ \end{trivlist}}
\newcommand\copyrighttext{%
\footnotesize \textcopyright 2012 IEEE. Personal use of this material is permitted.
Permission from IEEE must be obtained for all other uses, in any current or future media, including reprinting/republishing this material for advertising or promotional purposes, creating new collective works, for resale or redistribution to servers or lists, or reuse of any copyrighted component of this work in other works. 
}
\newcommand\copyrightnotice{%
\begin{tikzpicture}[remember picture,overlay]
\node[anchor=south,yshift=10pt] at (current page.south) {\fbox{\parbox{\dimexpr\textwidth-\fboxsep-\fboxrule\relax}{\copyrighttext}}};
\end{tikzpicture}%
}
\begin{document}

\title{PDE-based Dynamic Density Estimation for Large-scale Agent Systems}
\author{Tongjia Zheng, \IEEEmembership{Student Member, IEEE}, Qing Han, and Hai Lin, \IEEEmembership{Senior Member, IEEE}
\thanks{This work was supported in part by the National Science Foundation under Grant IIS-1724070, and Grant CNS-1830335, and in part by the Army Research Laboratory under Grant W911NF-17-1-0072.}
\thanks{Tongia Zheng and Hai Lin are with the Department of Electrical Engineering, University of Notre Dame, Notre Dame, IN 46556, USA (e-mail: tzheng1@nd.edu, hlin1@nd.edu.). }
\thanks{Qing Han is with the Department of Mathematics, University of Notre Dame, Notre Dame, IN 46556, USA (e-mail: Qing.Han.7@nd.edu.).}
}

\maketitle

\copyrightnotice

\thispagestyle{empty}
\pagestyle{empty}

\begin{abstract}
Large-scale agent systems have foreseeable applications in the near future. Estimating their macroscopic density is critical for many density-based optimization and control tasks, such as sensor deployment and city traffic scheduling.
In this paper, we study the problem of estimating their dynamically varying probability density, given the agents' individual dynamics (which can be nonlinear and time-varying) and their states observed in real-time. 
The density evolution is shown to satisfy a linear partial differential equation uniquely determined by the agents' dynamics.
We present a density filter which takes advantage of the system dynamics to gradually improve its estimation and is scalable to the agents' population.
Specifically, we use kernel density estimators (KDE) to construct a noisy measurement and show that, when the agents' population is large, the measurement noise is approximately ``Gaussian''. 
With this important property, infinite-dimensional Kalman filters are used to design density filters. 
It turns out that the covariance of measurement noise depends on the true density.
This state-dependence makes it necessary to approximate the covariance in the associated operator Riccati equation, rendering the density filter suboptimal. 
The notion of input-to-state stability is used to prove that the performance of the suboptimal density filter remains close to the optimal one.
Simulation results suggest that the proposed density filter is able to quickly recognize the underlying modes of the unknown density and automatically ignore outliers, and is robust to different choices of kernel bandwidth of KDE.
\end{abstract}

\begin{IEEEkeywords}
Large-scale systems, estimation, Kalman filtering, stochastic systems
\end{IEEEkeywords}


\section{Introduction}
\IEEEPARstart{E}{fficiently} estimating the continuously varying probability density of the states of large-scale agent systems is an important step for many density-based optimization and control tasks, such as sensor deployment \cite{foderaro2016distributed} and city traffic scheduling. 
In such scenarios, the agents (such as UAVs) are built by task designers and follow the specified control commands, which means their dynamics are known. 
While density estimation is a fundamental problem extensively studied in statistics, we are particularly interested in the case where the samples (i.e. the agent states) are governed by known dynamics (which can be nonlinear and time-varying).
We study how to take advantage of their dynamics to obtain efficient and convergent density estimates in real time.

In general, density estimation algorithms are classified as parametric and non-parametric. Parametric algorithms assume that the samples are drawn from a known parametric family of distributions with a fixed set of parameters, such as the Gaussian mixture models \cite{silverman1986density}. 
Performance of such estimators rely on the validity of the assumed models, and therefore they are unsuitable for estimating an evolving density. In non-parametric approaches, the data are allowed to speak for themselves in determining the density estimate. As a representative, \textit{kernel density estimation (KDE)} \cite{silverman1986density} has been widely used for estimating the global density in the study of swarm robotic systems \cite{foderaro2012decentralized, krishnan2018distributed, de2018optimal}. 
It is known that the performance of KDE largely depends on a smoothing parameter called the bandwidth \cite{silverman1986density}.
When the samples are stationary, many bandwidth selection techniques have been proposed, such as cross-validation \cite{silverman1986density}, adaptive bandwidth \cite{sain2002multivariate} and the plug-in technique \cite{sheather1991reliable}. 
Such techniques are heuristic in general, since any predefined optimality of bandwidth selection requires certain information of the unknown density. 
They are also not suitable for dynamic estimation which requires the algorithm to be computationally efficient and adapt to the density evolution. 
For dynamic density estimation, many adaptations of KDE have been proposed in the domain of data stream mining \cite{qahtan2016kde}.
In such problems, the dynamics of the density are usually unknown, and therefore little can be claimed about the convergence of the algorithm. 

To estimate the global distribution of large-scale agents with known dynamics, an alternative way is to formulate it as a filtering problem, for which there exists a large body of literature, such as the celebrated Kalman filters and their variants \cite{julier2004unscented, arasaratnam2009cubature}, the more general Bayesian filters and their Monte Carlo approach, also known as particle filters \cite{chen2003bayesian}.
Unfortunately, all these methods are known to suffer from the \textit{curse of dimensionality}, especially for large-scale nonlinear systems.
A potential solution is to use the so-called \textit{consensus filters} \cite{olfati2007distributed, bandyopadhyay2014distributed, battistelli2014kullback}, which obtain local distribution estimates of the agent system using local Kalman or Bayesian filters, and then compute the global distribution by averaging local estimates through some consensus protocols. 
However, few conclusions are made concerning the relationship between the true distribution and the estimated global distribution. 
Moreover, stability analysis is difficult when the agents' dynamics are nonlinear and time-varying.

In summary, considering efficiency and convergence requirements, existing methods are unsuitable for estimating the time-varying density of large-scale agent systems. 
This motivates us to propose a dynamic and scalable density estimation algorithm that can perform online and take advantage of the dynamics to guarantee its convergence. 
Specifically, we show that the agents' density is governed by a linear partial differential equation (PDE), called the Fokker-Planck equation, which is uniquely determined by the agents' dynamics. 
KDE is used to construct a noisy measurement of the density (the state of this PDE). 
The measurement noise is approximately ``Gaussian'' (more precisely, asymptotically Gaussian when the agents' population tends to infinity), so that infinite-dimensional Kalman filters can be used to design density filters. 
It turns out that the covariance of measurement noise depends on the true density, for which approximating the covariance is required and the density filter becomes suboptimal. 
We then use the notion of input-to-state stability to prove the stability of the suboptimal density filter and the associated operator Riccati equation. 

Our contributions contain the following aspects: (i) By using a density filter, we can (largely) circumvent the problem of bandwidth selection; (ii) The density filter takes advantage of the agents' dynamics to improve its density estimation, in the sense of minimizing the covariance of estimation error; 
(iii) The density filter is proved to be convergent and is scalable to the population of agents; 
(iv) All the results hold even if the agents' dynamics are nonlinear and time-varying.

The rest of the paper is organized as follows. Section \ref{section:preliminaries} introduces some preliminaries. Problem formulation is given in Section \ref{section:problem formulation}. Section \ref{section:main results} is our main results, in which we present a density filter and then study its stability/optimality. Section \ref{section:simulation} performs an agent-based simulation to verify the effectiveness of the density filter. Section \ref{section:conclusion} summarizes the contribution and points out future research.

\section{Preliminaries} \label{section:preliminaries}
\subsection{Infinite-Dimensional Kalman Filters} \label{section:Kalman filter}
The Kalman filter is an algorithm that uses the system's model, known control inputs and sequential measurements to form a better state estimate \cite{kalman1961new}. It was later extended to infinite-dimensional systems, represented using linear operators \cite{falb1967infinite, bensoussan1971filtrage}.  
Suppose the signal $x(t)$ and its measurement $y(t)$, both in a Hilbert space, are generated by the stochastic linear differential equations
\begin{align*}
    & dx=A(t) x d t+B(t)q(t) d v, \quad x(t_{0})=x_{0}, \quad E[x_{0}]=0\\
    & dy=C(t) x d t+ r(t)d w, \quad y(t_{0})=y_{0}
\end{align*}
where $dv$ and $dw$ are infinite-dimensional Wiener processes with incremental covariance operators $V$ and $W$, respectively \cite{da2014stochastic}. Assume $\operatorname{Cov}[v(t), w(\tau)]=0$ and $E[\langle v(t), w(\tau)\rangle]=0$ for all $t\neq\tau$. Denote $Q(t)=q(t)Vq^*(t)$ and $R(t)=r(t)Wr^*(t)$. The Kalman filter is an algorithm that minimizes the covariance of the estimation error, given by \cite{falb1967infinite}
$$
d \hat{x}=A(t) \hat{x} d t+L(t)[y-C(t) \hat{x}] d t
$$
where $L(t)=P(t) C^{*}(t) R^{-1}(t)$ is the called the optimal Kalman gain, and $P(t)$ is the solution of the Riccati equation
$$
\dot{P}=A P+P A^{*}-P C^{*} R^{-1} C P+B Q B^{*}
$$
with $P(t_{0})=\operatorname{Cov}[x_{0}, x_{0}]$. 

\subsection{Input-to-state stability}
Input-to-state stability (ISS) is a stability notion widely used to study stability of nonlinear control systems with external inputs \cite{sontag1995characterizations}. Roughly speaking, a control system is ISS if it is asymptotically stable in the absence of external inputs and if its trajectories are bounded by a function of the magnitude of the input. To define the ISS concept for infinite-dimensional systems we need to introduce the following classes of comparison functions \cite{sontag1995characterizations}.
\begin{align*}
    \mathcal{K} &:= \{\gamma :\mathbb{R}_{+} \to \mathbb{R}_{+}|\gamma \text{ is continuous and strictly} \\
    &\quad\quad \text{increasing}, \gamma (0)=0\}\\
    \mathcal{L} &:= \{\gamma  : \mathbb{R}_{+} \to \mathbb{R}_{+} | \gamma \text{ is continuous and strictly } \\
    &\quad\quad \text{decreasing with } \lim_{t\to \infty} \gamma (t) = 0\}  \\
    \mathcal{KL} &:=\{\beta:\mathbb{R}_{+} \times\mathbb{R}_{+} \to\mathbb{R}_{+} |\beta \text{ is continuous}, \beta(\cdot,t)\in \mathcal{K}, \\
    &\quad\quad \beta(r,\cdot)\in \mathcal{L}, \forall t\geq 0, \forall r>0\}.
\end{align*}

\begin{definition} \label{definition:(L)ISS}
(\textbf{ISS} \cite{dashkovskiy2013input}). Consider a control system $\Sigma = (X, U, \phi)$ consisting of normed linear spaces $(X, \|\cdot\|_X)$ and $(U, \|\cdot\|_U)$, called the state space and the input space, endowed with the norms $\|\cdot\|_X$ and $\|\cdot\|_U$ respectively, and a transition map $\phi:\mathbb{R}_{+}\times X \times U \to X$. The system is said to be ISS if there exist $\beta\in \mathcal{KL}$ and $\gamma\in \mathcal{K}$, such that
$$
\|\phi(t,x_0,u)\|_X \leq \beta(\|x_0\|_X,t-t_0) + \gamma(\sup_{t_0\leq\tau\leq t}\|u(\tau)\|_U)
$$
holds $\forall x_0\in X$, $\forall t\geq t_0$ and $\forall u\in U$. It is called locally input-to-state stable (LISS), if there also exists constants $\rho_{x}, \rho_{u}>0$ such that the above inequality holds $\forall x_0 :\|x_0\|_{X} \leq \rho_{x}, \forall t \geq t_0$ and $\forall u \in U :\|u\|_{U} \leq \rho_{u}$.
\end{definition}

\section{Problem formulation} \label{section:problem formulation} 
This paper studies the problem of estimating the dynamically varying probability density of large-scale stochastic agents. The dynamics of the agents are assumed to be known and satisfy the following stochastic differential equations
\begin{equation} \label{eq:Langevin equation}
    d X_i=\boldsymbol{v}(X_i, t) d t+\boldsymbol{\sigma}(X_i, t) d B_{t}, \quad i = 1,\dots,n,
\end{equation}
where $X_i\in\mathbb{R}^N$ represents the state of the $i$-th agent (e.g. positions), $\boldsymbol{v}=(v_{1}, \ldots, v_{N})\in\mathbb{R}^N$ is the deterministic dynamics (e.g. velocity fields), $B_{t}$ is an $M$-dimensional standard Wiener process, and  $\boldsymbol{\sigma}=[\sigma_{jk}]\in\mathbb{R}^{N\times M}$ represents the stochastic dynamics. We also assume the collection of states $\{X_i(t)\}_{i=1}^{n}$ are observable. 

The probability density $p(x, t)$ of agent states $\{X_i(t)\}_{i=1}^{n}$ is known to satisfy the Fokker-Planck equation \cite{pavliotis2014stochastic}:
\begin{equation} \label{eq:FP equation}
\begin{array}{cl}
     &\displaystyle\frac{\partial p(x, t)}{\partial t}=-\sum_{i=1}^{N} \frac{\partial}{\partial x_{i}}[v_{i}(x, t)p(x, t)] \\
     & \quad\quad\quad\quad\quad \displaystyle+\sum_{i=1}^{N} \sum_{j=1}^{N} \frac{\partial^{2}}{\partial x_{i} \partial x_{j}}[D_{i j}(x, t) p(x, t)],\\
     &p(\cdot,0)=p_0,
\end{array}
\end{equation}
where $p_0$ is the initial density and
$$
D_{i j}(x, t)=\frac{1}{2} \sum_{k=1}^{M} \sigma_{i k}(x, t) \sigma_{j k}(x, t).
$$
If the agent states are confined within a bounded domain $\Omega\subseteq\mathbb{R}^N$, one can impose a reflecting boundary condition
\begin{equation} \label{eq:zero-flux BC}
\mathbf{n} \cdot(\boldsymbol{g}-\boldsymbol{v}p)=0, \quad\text{on } \partial \Omega,
\end{equation}
where $\boldsymbol{g}=(\sum_{j=1}^{N}\frac{\partial}{\partial x_j}(D_{1j}p),\ldots,\sum_{j=1}^{N}\frac{\partial}{\partial x_j}(D_{Nj}p))$, $\partial\Omega$ is the boundary of $\Omega$ and $\mathbf{n}$ is the outward normal to $\partial\Omega$. 

\begin{remark} \label{remark:nonlinear and time-varying}
Note that the dynamics of agents' density \eqref{eq:FP equation} are uniquely determined by their individual dynamics \eqref{eq:Langevin equation} and are therefore known. This relationship holds even if \eqref{eq:Langevin equation} is nonlinear and time-varying. Therefore, the density filter to be presented applies to a very wide class of systems. It is worth pointing out that PDE \eqref{eq:FP equation} is always linear even if \eqref{eq:Langevin equation} is not. (It is time-varying if \eqref{eq:Langevin equation} is time-varying.)
\end{remark}

Now, we can formally state the problem to be solved as follows: 
\begin{problem}
Given the density dynamics \eqref{eq:FP equation} and agent states $\{X_i(t)\}_{i=1}^{n}$, we want to estimate their density $p(x,t)$.
\end{problem}

\section{Main Results} \label{section:main results}
In this section, we design a density filter to estimate the state (i.e. the density) of \eqref{eq:FP equation} by combining KDE and infinite-dimensional Kalman filters. Specifically, we use KDE to construct a noisy measurement of the unknown density and show that the measurement noise is approximately ``Gaussian'', which enables us to use infinite-dimensional Kalman filters to design a density filter. Stability analysis is also presented.

\subsection{Density Filter Design}
In this section, we use KDE to construct a measurement with Gaussian noise and design a density filter.

KDE is a non-parametric way to estimate an unknown probability density function (pdf) \cite{silverman1986density}. The agents' states $\{X_i(t)\}_{i=1}^{n}\subseteq\mathbb{R}^d$ at time $t$ can be seen as a set of $n$ independent samples drawn from the pdf $p(x,t)$. Fix $t$ and denote $f(x)=p(x,t)$. The density estimator is given by
\begin{equation} \label{eq:KDE}
f_n(x) = \frac{1}{n h^{d}} \sum_{i=1}^{n} K\Big(\frac{1}{h}(x-X_{i})\Big),
\end{equation}
where $K(x)$ is a kernel function \cite{silverman1986density}
and $h$ is the bandwidth, usually chosen as a function of $n$ such that $\lim _{n \rightarrow \infty} h=0$ and $\lim _{n \rightarrow \infty} n h=\infty$.
The Gaussian kernel is frequently used due to its infinite order of smoothness, given by
$$
K(x)=\frac{1}{(2 \pi)^{d/2}} \exp \Big(-\frac{1}{2}x^\intercal x\Big).
$$
It is known that the $f_n(x)$ is asymptotically normal and that $f_n(x_i)$ and $f_n(x_j)$ are asymptotically uncorrelated for any $x_i\neq x_j$. These two properties are summarized as follows.

\begin{lemma}
(\textbf{Asymptotic normality} \cite{cacoullos1966estimation}) Under conditions $\lim _{n \rightarrow \infty} h=0$ and $\lim _{n \rightarrow \infty} n h=\infty$, if the bandwidth $h$ tends to zero faster than the optimal rate, i.e.,
$$
h^{*}=o\left(\frac{1}{n}\right)^{1 /(d+4)}.
$$
then as $n\to\infty$, we have
\begin{equation} \label{eq:asymptotic normality}
    \sqrt{n h^{d}}(\hat{f}(x)-f(x)) \rightarrow \mathcal{N}\Big(0, f(x) \int[K(u)]^{2} d u\Big).
\end{equation}
\end{lemma}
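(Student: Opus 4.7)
The plan is to carry out the standard bias-variance decomposition for the kernel density estimator and then invoke a triangular-array central limit theorem. Write
\[
\hat{f}(x)-f(x) = \bigl(\hat{f}(x)-E[\hat{f}(x)]\bigr) + \bigl(E[\hat{f}(x)]-f(x)\bigr),
\]
so that the problem splits into controlling a ``variance part'' and a ``bias part''. After multiplying by $\sqrt{nh^d}$, the goal is to show that the bias part vanishes in probability while the variance part converges in distribution to the claimed Gaussian.

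First I would handle the bias. A change of variables gives
\[
E[\hat{f}(x)] = \int K(u)\,f(x-hu)\,du,
\]
and a second-order Taylor expansion of $f$ around $x$, together with the standard kernel conditions $\int K = 1$ and $\int u K(u)\,du = 0$, produces $E[\hat{f}(x)]-f(x) = O(h^2)$. Hence $\sqrt{nh^d}\bigl(E[\hat{f}(x)]-f(x)\bigr) = O\bigl(\sqrt{nh^{d+4}}\bigr)$, and this tends to zero precisely under the undersmoothing hypothesis $h = o(n^{-1/(d+4)})$. This is the reason the lemma explicitly strengthens the bandwidth requirement beyond $nh^d\to\infty$.

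Next I would attack the variance part via a Lindeberg--Feller (or Lyapunov) CLT for triangular arrays. Define
\[
Z_{n,i} = \frac{1}{h^{d}} K\!\left(\frac{x-X_i}{h}\right) - E\!\left[\frac{1}{h^{d}} K\!\left(\frac{x-X_i}{h}\right)\right],
\]
so that $\sqrt{nh^d}(\hat{f}(x)-E[\hat{f}(x)]) = \tfrac{1}{\sqrt{n/h^d}}\sum_{i=1}^{n} Z_{n,i}$, rescaled so the summands are i.i.d.\ within each row. A direct change of variables computation yields
\[
\operatorname{Var}(Z_{n,1}) = \frac{1}{h^{d}} \int K(u)^2 f(x-hu)\,du - \bigl(\text{bounded term}\bigr),
\]
which, after multiplying by $h^d$, converges to $f(x)\int K(u)^2\,du$ as $h\to 0$ by dominated convergence. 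This identifies the limiting variance. To verify the Lyapunov condition it suffices to bound, say, the third absolute moment of $Z_{n,1}$: one gets $E|Z_{n,1}|^{3} = O(h^{-2d})$, so the Lyapunov ratio is of order $(nh^d)^{-1/2}\to 0$ by the hypothesis $nh^d\to\infty$.

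The main obstacle, and the step I would spend the most care on, is the Lindeberg/Lyapunov verification: one has to check that the extreme values of $K((x-X_i)/h)$ become negligible fast enough relative to the variance, which in turn requires tracking how the moments of $Z_{n,i}$ scale with $h$. Once that is in place, Slutsky's theorem combines the vanishing bias contribution with the asymptotically normal variance contribution to deliver \eqref{eq:asymptotic normality}.
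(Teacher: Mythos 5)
Your proposal is correct and is essentially the argument behind the cited result: the paper gives no proof of this lemma (it is quoted from Cacoullos's 1966 paper, whose proof is exactly this bias--variance decomposition with a Taylor expansion killing the bias under the undersmoothing condition $h=o(n^{-1/(d+4)})$ and a Lyapunov/Lindeberg triangular-array CLT for the centered part). The only quibble is that your Lyapunov step uses $nh^{d}\to\infty$ while the lemma as stated assumes $nh\to\infty$; for $d>1$ the former is the condition actually needed (and is what the standard statement requires), so you are implicitly correcting a small imprecision in the lemma rather than introducing a gap.
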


\begin{lemma}
(\textbf{Asymptotic uncorrelatedness} \cite{cacoullos1966estimation}) Let $x_i$ and $x_j$ be two distinct continuity points of $f$. Then under $\lim _{n \rightarrow \infty} h=0$, as $n \rightarrow \infty$, the asymptotic covariance of $f_{n}(x_i)$ and $f_{n}(x_j)$ satisfies
$$
n h^{p} \operatorname{Cov}[f_{n}(x_i), f_{n}(x_j)] \rightarrow 0.
$$
\end{lemma}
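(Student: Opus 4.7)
The plan is to evaluate $\operatorname{Cov}[f_n(x_i),f_n(x_j)]$ in closed form using the independence of the samples, and then show that after the change of variables natural for KDE, both surviving terms vanish as $h\to 0$ because $x_i\neq x_j$ forces the two scaled kernels to live on essentially disjoint scales.

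First I would expand
\[
\operatorname{Cov}[f_n(x_i),f_n(x_j)] = \frac{1}{n^2 h^{2d}}\sum_{k,\ell=1}^{n}\operatorname{Cov}\!\left[K\!\left(\tfrac{x_i-X_k}{h}\right), K\!\left(\tfrac{x_j-X_\ell}{h}\right)\right].
\]
Because the $X_k$'s are i.i.d., every cross term with $k\neq \ell$ vanishes and only $n$ diagonal terms remain. Writing $K_i := K((x_i-X_1)/h)$ we therefore obtain
\[
\operatorname{Cov}[f_n(x_i),f_n(x_j)] = \frac{1}{n h^{2d}}\Bigl\{E[K_i K_j] - E[K_i]\,E[K_j]\Bigr\}.
\]

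Next I would express these expectations as integrals against $f$ and substitute $u=(x_i-y)/h$, giving
\[
E[K_i K_j] = h^{d}\!\int K(u)\,K\!\left(\tfrac{x_j-x_i}{h}+u\right) f(x_i-hu)\,du,
\]
\[
E[K_i] = h^{d}\!\int K(u)\,f(x_i-hu)\,du,
\]
and analogously for $E[K_j]$. Multiplying by $nh^{d}$ yields
\[
nh^{d}\operatorname{Cov}[f_n(x_i),f_n(x_j)] = \frac{1}{h^{d}}E[K_i K_j] - \frac{1}{h^{d}}E[K_i]\,E[K_j].
\]
The second piece equals $h^{d}\bigl(\!\int K(u)f(x_i-hu)\,du\bigr)\bigl(\!\int K(u)f(x_j-hu)\,du\bigr)$, whose bracketed integrals tend to $f(x_i)$ and $f(x_j)$ respectively at the continuity points, so the factor $h^{d}$ drives the whole quantity to $0$.

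The only real work lies in the remaining integral $\int K(u)\,K((x_j-x_i)/h+u)\,f(x_i-hu)\,du$. Since $x_i\neq x_j$ and $h\to 0$, the shift $(x_j-x_i)/h$ escapes to infinity, so $K((x_j-x_i)/h+u)\to 0$ pointwise in $u$ for any kernel decaying at infinity; combined with continuity of $f$ at $x_i$ (which bounds $f(x_i-hu)$ locally) and with the bound $K(u)\,K((x_j-x_i)/h+u)\le \|K\|_\infty K(u)$, dominated convergence delivers the limit $0$. The main obstacle is this last step: for a compactly supported kernel it is almost trivial because the two translated supports become disjoint once $h$ is sufficiently small, whereas for a kernel like the Gaussian one must invoke dominated convergence carefully, which is precisely where the hypothesis that $x_i,x_j$ be continuity points of $f$ becomes essential. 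Combining the two vanishing contributions then gives $nh^{d}\operatorname{Cov}[f_n(x_i),f_n(x_j)]\to 0$, as claimed.
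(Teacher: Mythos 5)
The paper does not actually prove this lemma; it is quoted verbatim (with a typo, $h^p$ for $h^d$) from the cited reference of Cacoullos, so there is no in-paper argument to compare against. Your computation is the standard proof from that reference and is essentially correct: independence kills the off-diagonal terms, the product-of-means term carries an extra factor $h^d$ and vanishes, and the cross-moment term vanishes because the shift $(x_j-x_i)/h$ pushes one kernel off to infinity. The only loose point is the final dominated-convergence step: your dominating function $\|K\|_\infty K(u)$ controls the product of kernels but not the factor $f(x_i-hu)$, which continuity at $x_i$ bounds only on the region $\|hu\|\le M$, not globally. If $f$ is not assumed bounded, you need either a compactly supported kernel or the usual Parzen/Cacoullos tail condition $\|u\|^d K(u)\to 0$ as $\|u\|\to\infty$ (satisfied by the Gaussian), splitting the integral at $\|u\|=\|x_j-x_i\|/(2h)$ so that the tail contributes $h^{-d}\sup_{\|v\|\ge \delta/(2h)}K(v)\cdot\|K\|_\infty\to 0$. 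You correctly identify this as the delicate step, but the fix is the tail decay of $K$, not just continuity of $f$ at $x_i,x_j$.
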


In practice, $n$ is finite. Hence, the kernel density estimator is biased (i.e. $E[\hat{f}_n(x)]\neq f(x)$). Also, $f_{n}(x_i)$ and $f_{n}(x_j)$ are correlated. However, one can always choose a smaller bandwidth $h$ to reduce the bias and the covariance \cite{cacoullos1966estimation, silverman1986density}, which means that $f_n(x)-f(x)$ can be made approximately Gaussian with independent components when $n$ is large.
It is known that the performance of KDE largely depends on its bandwidth \cite{silverman1986density}.
Any predefined optimality of the bandwidth selection requires certain information of the unknown density $f$.
For example, the one that balances the estimation bias and variance depends on the second-order derivatives of $f$.
An excessively large (small) bandwidth may cause the problem of oversmoothing (undersmoothing).
For the density filter to be presented, we tend to choose a smaller bandwidth.
We recall that ``filters'' essentially combine past outputs to produce better estimates, which can ease the problem of undersmoothing.
In this regard, by using a density filter, we can largely circumvent the problem of optimal bandwidth selection.
Simulation studies will show that the performance of the proposed density filter is robust to different choices of bandwidth.

We are now ready to present a density filter using the infinite-dimensional Kalman filters. We rewrite the PDE of density evolution \eqref{eq:FP equation} in the form of an evolution equation and use KDE to construct a noisy measurement $y(t)$:
\begin{equation} \label{eq:evolution equation of density}
\begin{aligned}
    \Dot{p}(t) &= A(t)p(t)
    \\
    y(t)&= p_{\text{KDE}}(t) = p(t)+w(t)
\end{aligned}
\end{equation}
where $A(t)=-\sum_{i=1}^{N} \frac{\partial}{\partial x_{i}}(v_{i}\cdot)+\sum_{i=1}^{N} \sum_{j=1}^{N} \frac{\partial^{2}}{\partial x_{i} \partial x_{j}}(D_{i j}\cdot)$ is a linear operator, $p_{\text{KDE}}(t)$ represents a kernel density estimator using the states $\{X_i(t)\}_{i=1}^{n}$ at time $t$, and $w(t)$ is the measurement noise which is approximately Gaussian with covariance operator $R(t)=k\operatorname{diag}(p(t))$ where $k>0$ is a constant depending on $n$ and $h$.

According to Section \ref{section:Kalman filter}, the optimal density filter can be designed as
\begin{equation} \label{eq:optimal density filter}
    \Dot{\Hat{p}} = A(t)\Hat{p}+L(t)(y-\Hat{p}),\quad \Hat{p}(t_0)=p_{\text{KDE}}(t_0),
\end{equation}
where $L(t)=P(t) R^{-1}(t)$ is the optimal Kalman gain and $P(t)$ is a solution of the following operator Riccati equation 
\begin{equation} \label{eq:optimal Riccati}
    \dot{P}=AP+PA^{*}-PR^{-1}P.
\end{equation}

The Riccati equation \eqref{eq:optimal Riccati} actually depends on the unknown state/density $p(t)$ because $R(t)=k\operatorname{diag}(p(t))$, which means we have to approximate $R(t)$ at the same time. 
Intuitively, $\Bar{R}(t)=k\operatorname{diag}(\Hat{p}(t))$ would be a reasonable approximation. 
However, this would make \eqref{eq:optimal density filter} and \eqref{eq:optimal Riccati} strongly coupled, for which it is very difficult to analyze the stability. 
Therefore, we use $\Bar{R}(t)=\Bar{k}\operatorname{diag}(p_{\text{KDE}}(t))$, where $\Bar{k}$ is computed as $\Bar{k}=(\int[K(u)]^2du)/(nh^d)$ according to \eqref{eq:asymptotic normality}. 
In this way, we can treat the approximation error as an external disturbance and use the concept of ISS to study its stability.
We note that \eqref{eq:asymptotic normality} should be understood to be accurate in the limit ($n\to\infty$).
When $n$ is finite, the estimate may be inaccurate.
However, we shall point out that the estimation error of $\Bar{k}$ is also included as part of the approximation error of $\Bar{R}(t)$.
   
By approximating $R(t)$ with $\Bar{R}(t)=\Bar{k}\operatorname{diag}(p_{\text{KDE}}(t))$, the ``suboptimal'' density filter is correspondingly given by
\begin{equation} \label{eq:suboptimal density filter}
    \Dot{\Hat{p}} = A(t)\Hat{p}+\Bar{L}(t)(y-\Hat{p}),\quad \Hat{p}(t_0)=p_{\text{KDE}}(t_0)
\end{equation}
where $\Bar{L}(t)=\Bar{P}(t)\Bar{R}^{-1}(t)$ is the suboptimal Kalman gain and $\Bar{P}(t)$ is a solution of the approximated Riccati equation 
\begin{equation} \label{eq:suboptimal Riccati}
    \dot{\Bar{P}}=A\Bar{P}+\Bar{P}A^{*}-\Bar{P}\Bar{R}^{-1}\Bar{P}.
\end{equation}

The optimal density filter \eqref{eq:optimal density filter} essentially combines linearly all past density estimations $\{\Hat{p}(\tau)\}_{t_0\leq\tau<t}$ and the most recent KDE measurement $p_{\text{KDE}}(t)$ to produce a density estimate $\Hat{p}(t)$ with minimum estimation error covariance. The approximated density filter \eqref{eq:suboptimal density filter} is called ``suboptimal'' in the sense that the suboptimal gain $\Bar{L}$ remains ``close'' to the optimal gain $L$ (which will be proved in Corollary \ref{corollary:gain remains close}).

\subsection{Stability Analysis of the Suboptimal Density Filter}
In this section, we study the stability of the suboptimal density filter \eqref{eq:suboptimal density filter} and the associated Riccati equation \eqref{eq:suboptimal Riccati}. 

Let $P_*>0$ be the minimum (but unknown) covariance at $t=t_0$. We denote by $\Pi(t)$ the solution of \eqref{eq:optimal Riccati} with the initial condition $P_*$, which represents the flow of minimum covariance and also generates the optimal gain. Let $P_0>0$ be a ``guessed'' initial condition. 
Denote by $\Bar{\Pi}(t)$ the solution of \eqref{eq:suboptimal Riccati} with the initial condition $P_0$. We will show that $\Bar{\Pi}(t)$ converges to and remains close to $\Pi(t)$ in the presence of approximation error on $R$.

Define $\Gamma=\Bar{\Pi}-\Pi$. Using \eqref{eq:optimal Riccati} and \eqref{eq:suboptimal Riccati} we have
\begin{equation} \label{eq:Riccati error equation1}
\begin{aligned}
    &\Dot{\Gamma} = A\Gamma+\Gamma A^{*} - \Bar{\Pi}\Bar{R}^{-1}\Bar{\Pi} + \Pi R^{-1}\Pi,\\
    &\Gamma(t_0) =P_0-P_*.
\end{aligned}
\end{equation}
Define $\Tilde{p} = \Hat{p}-p$. Then along $\Bar{\Pi}(t)$ we have
\begin{equation} \label{eq:estimation error equation}
    \Dot{\Tilde{p}} = (A-\Bar{\Pi}\Bar{R}^{-1})\Tilde{p}+\Bar{\Pi}\Bar{R}^{-1}w.
\end{equation}

Our idea is to show that \eqref{eq:Riccati error equation1} is ISS with respect to the approximation error on $R$ (more precisely $\|\Bar{R}^{-1}-R^{-1}\|$, which equals $0$ if and only if $R=\Bar{R}$). In this way, the suboptimal gain $\Bar{L}$ remains close to $L$ when there is any approximation error, and converges to $0$ when the approximation error vanishes. We also show that the suboptimal density filter \eqref{eq:suboptimal density filter} is stable even though we use an approximation for $R$.

\begin{remark} \label{remark:true covariance equation}
To avoid confusions, we point out that if we use \eqref{eq:optimal Riccati} to compute the gain for \eqref{eq:optimal density filter}, then the estimation error covariance $\operatorname{Cov}[\Tilde{p},\Tilde{p}]$ along $\Pi(t)$ also satisfies \eqref{eq:optimal Riccati}, which is a property of Kalman filters \cite{kalman1961new}. 
However, if we use \eqref{eq:suboptimal Riccati} to compute the gain for \eqref{eq:suboptimal density filter}, then the covariance $\operatorname{Cov}[\Tilde{p},\Tilde{p}]$ along $\Bar{\Pi}(t)$, denoted by $Q(t)$ in this case, actually satisfies
\begin{equation} \label{eq:true covariance equation}
    \dot{Q}=AQ+QA^{*}-\Bar{\Pi}(t)\Bar{R}^{-1}R\Bar{R}^{-1}\Bar{\Pi}(t).
\end{equation}
It would be desirable to prove that the solutions of \eqref{eq:optimal Riccati} and \eqref{eq:true covariance equation} remain close, which is much harder because it involves three Riccati equations and will be left as our future work.
\end{remark}

The following assumption is required for proving stability.

\begin{assumption} \label{assumption:uniform boundedness}
Assume that $\|\Pi(t)\|$ and $\|\Bar{\Pi}(t)\|$ are uniformly bounded, and that there exist positive constants $c_1$ and $c_2$ such that for all $t\geq t_0$,
\begin{equation}\label{eq:uniform positivity}
    0<c_1I\leq R^{-1}(t),\Bar{R}^{-1}(t),\Pi^{-1}(t),\Bar{\Pi}^{-1}(t)\leq c_2I.
\end{equation}
\end{assumption}

\begin{remark} \label{remark:uniform boundedness}
The assumption for $R^{-1}(t)$ is an imposed requirement for $p(t)$ considering $R(t)=k\operatorname{diag}(p(t))$, which roughly speaking, requires that $p(t)$ has positive upper bounds and lower bounds. 
The assumption for $\Bar{R}^{-1}(t)$ can be easily satisfied because $\Bar{R}(t)=\Bar{k}\operatorname{diag}(p_{\text{KDE}}(t))$ and we construct $p_{\text{KDE}}(t)$. 
For finite-dimensional systems, the assumptions for $\|\Pi(t)\|$, $\|\Bar{\Pi}(t)\|$, $\Pi^{-1}(t)$ and $\Bar{\Pi}^{-1}(t)$ follow the assumption for $R^{-1}(t)$ and $\Bar{R}^{-1}(t)$. 
This is because when $\Pi(t)$ and $\Bar{\Pi}(t)$ are large, the solutions of \eqref{eq:optimal Riccati} and \eqref{eq:suboptimal Riccati} will be dominated by the negative second-order term and decay. 
Although intuitively correct, we find it much harder to prove for the infinite-dimensional case and thus leave it as our future work.
\end{remark}

Stability results for \eqref{eq:Riccati error equation1} and \eqref{eq:estimation error equation} are given as follows.

\begin{theorem} \label{thm:error Riccati LISS}
Under Assumption \ref{assumption:uniform boundedness}, the unforced part of \eqref{eq:estimation error equation}, given by
\begin{equation} \label{eq:unforced estimation error equation1}
    \Dot{\Tilde{p}} = (A-\Bar{\Pi}\Bar{R}^{-1})\Tilde{p},
\end{equation}
is uniformly exponentially stable, and \eqref{eq:Riccati error equation1} is locally input-to-state stable (LISS) with respect to the approximation error in the form of $\|\Bar{R}^{-1}-R^{-1}\|$.
\end{theorem}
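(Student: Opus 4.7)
The plan is to tackle the two claims in order, with the first claim feeding directly into the second. For the uniform exponential stability of the unforced dynamics \eqref{eq:unforced estimation error equation1}, I would introduce the Lyapunov functional $V_1(t,\Tilde{p}) := \langle \Tilde{p}, \Bar{\Pi}^{-1}(t)\Tilde{p}\rangle$, which, by Assumption \ref{assumption:uniform boundedness}, is sandwiched between constant multiples of $\|\Tilde{p}\|^2$. Differentiating the identity $\Bar{\Pi}(t)\Bar{\Pi}^{-1}(t)=I$ and substituting the suboptimal Riccati equation \eqref{eq:suboptimal Riccati} yields
\[
\frac{d}{dt}\Bar{\Pi}^{-1} = -\Bar{\Pi}^{-1}A - A^{*}\Bar{\Pi}^{-1} + \Bar{R}^{-1}.
\]
When this is combined with \eqref{eq:unforced estimation error equation1} inside $\dot{V}_1$, the $A$-terms cancel telescopically and one is left with $\dot{V}_1 = -\langle \Tilde{p}, \Bar{R}^{-1}\Tilde{p}\rangle \leq -c_1\|\Tilde{p}\|^2 \leq -(c_1/c_2) V_1$, which delivers uniform exponential decay of $\|\Tilde{p}(t)\|$ and, equivalently, an operator-norm bound $\|\Phi(t,s)\|\leq Me^{-\alpha(t-s)}$ on the evolution family $\Phi(t,s)$ generated by $F(t):=A(t)-\Bar{\Pi}(t)\Bar{R}^{-1}(t)$.

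For the second claim, my plan is to rewrite the Riccati error equation \eqref{eq:Riccati error equation1} to expose this same generator. Substituting $\Bar{\Pi} = \Pi + \Gamma$ inside $-\Bar{\Pi}\Bar{R}^{-1}\Bar{\Pi}+\Pi R^{-1}\Pi$ and rearranging, I expect to reach
\[
\Dot{\Gamma} = F\Gamma + \Gamma F^{*} + \Gamma\Bar{R}^{-1}\Gamma + \Pi(R^{-1}-\Bar{R}^{-1})\Pi,
\]
where $F^{*}=A^{*}-\Bar{R}^{-1}\Bar{\Pi}$ is the adjoint of $F$ because $\Bar{\Pi}$ and $\Bar{R}$ are self-adjoint. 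Since adjoints preserve the operator norm on a Hilbert space, the bound of the first part transfers to $\|\Phi^{*}(t,s)\|\leq Me^{-\alpha(t-s)}$. The operator-valued variation-of-constants formula for Sylvester-type equations then gives
\[
\Gamma(t) = \Phi(t,t_0)\Gamma(t_0)\Phi^{*}(t,t_0) + \int_{t_0}^{t}\Phi(t,s)\bigl[\Gamma(s)\Bar{R}^{-1}(s)\Gamma(s)+\Pi(s)(R^{-1}(s)-\Bar{R}^{-1}(s))\Pi(s)\bigr]\Phi^{*}(t,s)\,ds.
\]

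Taking operator norms, using $\|\Bar{R}^{-1}\|\leq c_2$ and the uniform bound $\|\Pi\|\leq C_\Pi$ from Assumption \ref{assumption:uniform boundedness}, and imposing the bootstrap hypothesis $\sup_{s\leq t}\|\Gamma(s)\|\leq\delta$ so that the quadratic term is absorbed into a linear term of size $c_2\delta\|\Gamma\|$, a generalized Gronwall argument produces an LISS-type estimate of the form
\[
\|\Gamma(t)\|\leq M^{2}e^{-(2\alpha-M^{2}c_2\delta)(t-t_0)}\|\Gamma(t_0)\| + \frac{M^{2}C_\Pi^{2}}{2\alpha-M^{2}c_2\delta}\sup_{s\in[t_0,t]}\|\Bar{R}^{-1}(s)-R^{-1}(s)\|.
\]
Choosing $\delta$ small enough that $2\alpha-M^{2}c_2\delta>0$, and then restricting $\|\Gamma(t_0)\|$ and $\sup_s\|\Bar{R}^{-1}-R^{-1}\|$ to lie in a small enough neighborhood of the origin, closes the bootstrap and yields the LISS property in the sense of Definition \ref{definition:(L)ISS}. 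The principal obstacle is exactly the quadratic term $\Gamma\Bar{R}^{-1}\Gamma$, which is what prevents ISS from being global and forces the conclusion to be local; a subsidiary technical issue is rigorously justifying the operator-valued variation of constants and the duality transfer of the exponential bound from $\Phi$ to $\Phi^{*}$ in the infinite-dimensional, time-varying setting.
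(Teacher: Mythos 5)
Your proof of the first claim is exactly the paper's: the Lyapunov functional $V_1=\langle\Bar{\Pi}^{-1}\Tilde{p},\Tilde{p}\rangle$, the cancellation via the suboptimal Riccati equation leaving $\dot V_1=-\langle\Bar{R}^{-1}\Tilde{p},\Tilde{p}\rangle$, and the equivalence of $V_1$ with $\|\Tilde{p}\|^2$ from Assumption \ref{assumption:uniform boundedness}. For the second claim you take a genuinely different route. You substitute $\Bar{\Pi}=\Pi+\Gamma$ symmetrically, which produces the single generator $F=A-\Bar{\Pi}\Bar{R}^{-1}$ on both sides but leaves the quadratic remainder $\Gamma\Bar{R}^{-1}\Gamma$; your algebra is correct, and you then must absorb that term with a bootstrap hypothesis $\sup_s\|\Gamma(s)\|\leq\delta$ and a smallness condition $2\alpha>M^2c_2\delta$. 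The paper instead splits the difference of quadratic terms asymmetrically, writing $-\Bar{\Pi}\Bar{R}^{-1}\Bar{\Pi}+\Pi R^{-1}\Pi=-\Bar{\Pi}\Bar{R}^{-1}\Gamma-\Gamma R^{-1}\Pi-\Bar{\Pi}(\Bar{R}^{-1}-R^{-1})\Pi$, so that $\Gamma$ satisfies the exactly linear Sylvester equation $\dot\Gamma=(A-\Bar{\Pi}\Bar{R}^{-1})\Gamma+\Gamma(A^*-R^{-1}\Pi)-\Bar{\Pi}(\Bar{R}^{-1}-R^{-1})\Pi$ with no quadratic term at all; the price is that one needs uniform exponential stability of \emph{two} evolution families, $A-\Bar{\Pi}\Bar{R}^{-1}$ and $A-\Pi R^{-1}$, the second of which the paper obtains from the companion functional $V_2=\langle\Pi^{-1}\Tilde{p},\Tilde{p}\rangle$. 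The trade-off: the paper's decomposition removes the bootstrap entirely and requires no smallness of $\Gamma(t_0)$ beyond the standing boundedness assumptions (locality enters only through Assumption \ref{assumption:uniform boundedness}), whereas your version needs only one stability estimate and a duality transfer to $\Phi^*$, but the quadratic term genuinely forces the restriction to small $\Gamma(t_0)$ and small inputs. Both arguments are sound and both land on an estimate of the required $\beta+\gamma$ form, so your proposal is a valid, if slightly more laborious, alternative.
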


\begin{proof}
To prove the first statement, consider a Lyapunov functional $V_1 = \langle \Bar{\Pi}^{-1}\Tilde{p},\Tilde{p} \rangle$.
We have
\begin{align*}
    \Dot{V}_1 &= \big\langle\Bar{\Pi}^{-1}\Dot{\Tilde{p}},\Tilde{p}\big\rangle + \big\langle\Bar{\Pi}^{-1}\Tilde{p},\Dot{\Tilde{p}}\big\rangle - \big\langle\Bar{\Pi}^{-1}\Dot{\Bar{\Pi}}\Bar{\Pi}^{-1}\Tilde{p},\Tilde{p}\big\rangle\\
    &= \big\langle\Bar{\Pi}^{-1}(A-\Bar{\Pi}\Bar{R}^{-1})\Tilde{p},\Tilde{p}\big\rangle + \big\langle(A^*-\Bar{R}^{-1}\Bar{\Pi})\Bar{\Pi}^{-1}\Tilde{p},\Tilde{p}\big\rangle\\
    & \quad -\big\langle \Bar{\Pi}^{-1}(A\Bar{\Pi}+\Bar{\Pi}A^{*}-\Bar{\Pi}\Bar{R}^{-1}\Bar{\Pi})\Bar{\Pi}^{-1}\Tilde{p},\Tilde{p} \big\rangle\\
    &= -\langle \Bar{R}^{-1}\Tilde{p},\Tilde{p} \rangle.
\end{align*}
In view of \eqref{eq:uniform positivity}, we conclude that \eqref{eq:unforced estimation error equation1} is uniformly exponentially stable. 
Similarly, by considering a Lyapunov functional defined by $V_2=\langle\Pi^{-1}\Tilde{p},\Tilde{p}\rangle$, one can show that the following system along $\Pi(t)$ is also uniformly exponentially stable:
\begin{equation} \label{eq:unforced estimation error equation2}
    \Dot{\Tilde{p}} = (A-\Pi R^{-1})\Tilde{p}.
\end{equation}
We also note that the inner products in $V_1$ and $V_2$ are equivalent because of the assumption \eqref{eq:uniform positivity}.
To prove the second statement, we rewrite \eqref{eq:Riccati error equation1} as
\begin{equation}\label{eq:Riccati error equation2}
\begin{aligned}
    \Dot{\Gamma}
    &=A\Gamma+\Gamma A^{*} - \Bar{\Pi}\Bar{R}^{-1}\Gamma - \Bar{\Pi} \Bar{R}^{-1}\Pi - \Gamma R^{-1}\Pi + \Bar{\Pi}R^{-1}\Pi\\
    &=(A-\Bar{\Pi}\Bar{R}^{-1})\Gamma + \Gamma(A^{*}-R^{-1}\Pi) - \Bar{\Pi}(\Bar{R}^{-1}-R^{-1})\Pi
\end{aligned}
\end{equation}
We note that \eqref{eq:Riccati error equation2} is essentially a linear equation. 
Now fix $q$ with $\|q\|=1$. 
Since \eqref{eq:unforced estimation error equation1} and \eqref{eq:unforced estimation error equation2} are uniformly exponentially stable, and $\|\Bar{\Pi}\|$ and $\|\Pi\|$ are assumed to be uniformly bounded, there exist constants $\lambda,c>0$ such that
\begin{align*}
    \quad\|\Gamma(t)q\|
    &\leq e^{-\lambda(t-t_0)}\|\Gamma(t_0)q\| \\
    &\quad+\int_{t_0}^{t}e^{-\lambda(t-\tau)}c\|\bar{R}^{-1}(\tau)-R^{-1}(\tau)\|\|q\|d\tau \\
\end{align*}
\begin{align*}
    &\leq e^{-\lambda(t-t_0)}\|\Gamma(t_0)q\| \\
    &\quad+c\sup_{t_0\leq\tau\leq t}\|\bar{R}^{-1}(\tau)-R^{-1}(\tau)\|\int_{t_0}^{t}e^{-\lambda(t-s)}ds \\
    &\leq e^{-\lambda(t-t_0)}\|\Gamma(t_0)q\|+\frac{c}{\lambda}\sup_{t_0\leq\tau\leq t}\|\bar{R}^{-1}(\tau)-R^{-1}(\tau)\| \\
    &=:\beta(\|\Gamma(t_0)q\|,t-t_0) + \gamma(\sup_{t_0\leq\tau\leq t}\|\Bar{R}^{-1}(\tau)-R^{-1}(\tau)\|),
\end{align*}
where $\beta\in\mathcal{KL}$ and $\gamma\in\mathcal{K}$. Under the uniform boundedness assumption of $\Pi$ and $\Bar{\Pi}$, we conclude that \eqref{eq:Riccati error equation1} is LISS.
\end{proof}

We can conclude from Theorem \ref{thm:error Riccati LISS} that the suboptimal gain $\Bar{L}$ also remains close to the optimal gain $L$, given as follows.

\begin{corollary}\label{corollary:gain remains close}
Under Assumption \ref{assumption:uniform boundedness}, there exist functions $\beta_1\in\mathcal{KL}$ and $\gamma_1\in\mathcal{K}$ such that
$$
\|\Bar{L}-L\|\leq\beta_1(\|\Gamma(t_0)\|,t-t_0) + \gamma_1(\|\Bar{R}^{-1}-R^{-1}\|).
$$
\end{corollary}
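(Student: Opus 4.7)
The plan is to reduce $\|\Bar{L}-L\|$ to two quantities already under control: the covariance gap $\Gamma=\Bar{\Pi}-\Pi$, whose LISS behavior is the content of Theorem \ref{thm:error Riccati LISS}, and the raw measurement-noise approximation error $\|\Bar{R}^{-1}-R^{-1}\|$, which already appears on the right-hand side of the claimed inequality. The key algebraic step is the telescoping identity
$$
\Bar{L}-L = \Bar{\Pi}\Bar{R}^{-1}-\Pi R^{-1} = \Bar{\Pi}(\Bar{R}^{-1}-R^{-1}) + \Gamma\,R^{-1},
$$
obtained by adding and subtracting $\Bar{\Pi}R^{-1}$. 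This cleanly isolates the two discrepancy sources.

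Taking operator norms and using submultiplicativity yields
$$
\|\Bar{L}-L\| \le \|\Bar{\Pi}\|\,\|\Bar{R}^{-1}-R^{-1}\| + \|R^{-1}\|\,\|\Gamma\|.
$$
By Assumption \ref{assumption:uniform boundedness}, $\|\Bar{\Pi}\|$ is uniformly bounded by some constant $M$ and $\|R^{-1}\|\le c_2$. For $\|\Gamma\|$ itself, I would upgrade the pointwise estimate produced in the proof of Theorem \ref{thm:error Riccati LISS}: that argument gives, for every unit vector $q$, a bound $\|\Gamma(t)q\|\le \beta(\|\Gamma(t_0)\|,t-t_0)+\gamma(\sup_{t_0\le\tau\le t}\|\Bar{R}^{-1}(\tau)-R^{-1}(\tau)\|)$ with constants independent of $q$, so taking the supremum over $\|q\|=1$ delivers the same bound on the operator norm $\|\Gamma(t)\|$.

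Substituting the two bounds back and setting $\beta_1(r,s):=c_2\,\beta(r,s)$ and $\gamma_1(r):=Mr+c_2\,\gamma(r)$ then gives the claimed estimate; the class memberships $\beta_1\in\mathcal{KL}$ and $\gamma_1\in\mathcal{K}$ are immediate, since positive scaling preserves $\mathcal{KL}$ and the sum of a linear function and a $\mathcal{K}$ function is again $\mathcal{K}$. The only mildly subtle point is the operator-norm upgrade in the previous paragraph, and it is essentially free because all constants in the Lyapunov estimate used inside Theorem \ref{thm:error Riccati LISS} are independent of the test vector. There is no harder obstacle: the corollary is a one-line algebraic decomposition plus the already-established LISS property of $\Gamma$ and the uniform boundedness hypotheses.
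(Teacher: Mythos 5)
Your proof is correct and follows essentially the same route as the paper: both use the add-and-subtract telescoping of $\Bar{\Pi}\Bar{R}^{-1}-\Pi R^{-1}$ (you insert $\Bar{\Pi}R^{-1}$, the paper inserts $\Pi\Bar{R}^{-1}$, a trivially symmetric choice), then invoke Assumption \ref{assumption:uniform boundedness} for the bounded factors and Theorem \ref{thm:error Riccati LISS} for $\|\Gamma\|$. Your extra remark about upgrading the pointwise bound $\|\Gamma(t)q\|$ to an operator-norm bound is a detail the paper leaves implicit, but it does not change the argument.
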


\begin{proof}
Observe that
\begin{align*}
    \|\Bar{L}-L\| &=\|\Bar{\Pi}\Bar{R}^{-1}-\Pi R^{-1}\| \\
    &\leq \|\Bar{\Pi}\Bar{R}^{-1}-\Pi\Bar{R}^{-1} + \Pi\Bar{R}^{-1}-\Pi R^{-1}\| \\
    &\leq \|\Bar{R}^{-1}\|\|\Bar{\Pi}-\Pi\| + \|\Pi\|\|\Bar{R}^{-1}-R^{-1}\|
\end{align*}
In view of Definition \ref{definition:(L)ISS}, Theorem \ref{thm:error Riccati LISS} and the uniform boundedness of $\|\Pi\|$ and $\|\Bar{R}^{-1}\|$, we obtain the desired result.
\end{proof}

\section{Simulation Studies} \label{section:simulation}
In this section, we study the performance of the proposed density filter. Consider a collection of $300$ agents given by
\begin{equation}
    dX_i=\nabla\cdot\frac{D\nabla f(x)}{f(x)} dt+DdB_t, \quad i = 1,\dots,300,
\end{equation}
where the states $\{X_i\}_{i=1}^{300}$ are restricted within $\Omega=[0,1]^2$, $D=0.05$ and $f(x)$ is a continuous pdf over $\Omega$ to be specified. The initial conditions $\{X_i(t_0)\}_{i=1}^{300}$ are drawn from the uniform distribution over $\Omega$. Therefore, the ground truth density of the agents satisfies
\begin{align*} 
\begin{split}
     &\partial_t p(x,t) =-\nabla\cdot\frac{Dp(x,t)\nabla f(x)}{f(x)} + \frac{1}{2}D^2\Delta p(x,t), \\
     &p(\cdot,t_0)=1,
\end{split}
\end{align*}
with a reflecting boundary condition \eqref{eq:zero-flux BC}. For illustration purpose, we let $f(x)$ be a Gaussian mixture of two components with the common covariance matrix $\text{diag}(0.015,0.015)$ but different time-varying means $[0.5+0.35\cos(0.2t),0.5+0.35\sin(0.2t)]^\intercal$ and $[0.5+0.35\cos(0.2t+\pi),0.5+0.35\sin(0.2t+\pi)]^\intercal$. Under this design, the agents are nonlinear and time-varying and their states will converge to the two ``spinning'' Gaussian components.
 
We use the finite difference method \cite{chang1970practical} to numerically solve the density filter \eqref{eq:suboptimal density filter} and the operator Riccati equation \eqref{eq:suboptimal Riccati}. Specifically, $\Omega$ is equally divided into a $30\times30$ grid. The pdfs $\Hat{p}$ and $p_{\text{KDE}}$ are represented as $900\times1$ vectors. The operators $A$, $\Bar{P}$ and $\Bar{R}$ are represented as $900\times900$ matrices. We set the initial conditions to be $\Bar{P}(t_0) = I$ and $\Hat{p}(t_0) = p_{\text{KDE}}(t_0)$. The time difference is $dt=0.1s$ and the bandwidth is $h=0.05$. Note that $A$ is highly sparse and $\Bar{R}$ is diagonal, so the computation is very fast in general. 

Simulation results are given in Fig. \ref{fig:density filter}. We see that the estimate by KDE is usually rugged, especially in the early stage when the true density is nearly uniform. (If we had known that it is nearly uniform, we would have chosen a much larger bandwidth $h$ for better smoothing.) In the late stage, the estimate by KDE has two major components since the samples are more concentrated. But it still has undesired modes due to outliers. The density filter, by taking advantage of the dynamics, quickly "recognizes" the two underlying Gaussian components and gradually catches up with the evolution of the ground truth density. In Fig. \ref{fig:error}, we compare the $L^2$ norms of estimation errors of the KDE and the density filter under different choices of bandwidth. It shows that the estimation error of the density filter quickly converges and the performance is robust to different choices of bandwidth.

\begin{figure}[hbt!]
    \centering
    \includegraphics[width=0.9\columnwidth]{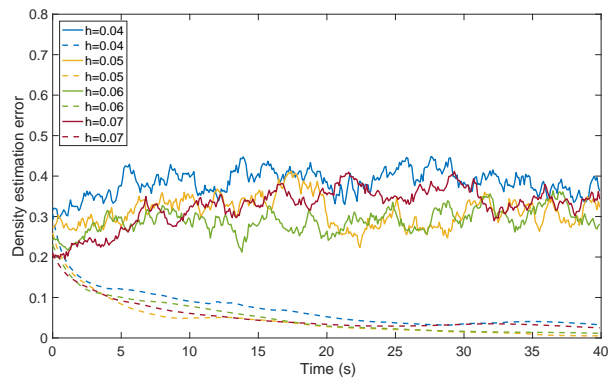}
    \caption{Estimation errors of the KDE (solid line) and the density filter (dashed line) with different choices of bandwidth.}
    \label{fig:error}
\end{figure}

\begin{figure*}[t]
\setlength{\belowcaptionskip}{-0.5cm}
    \centering
    \begin{subfigure}[b]{0.22\textwidth}
        \centering
        \includegraphics[width=0.95\textwidth]{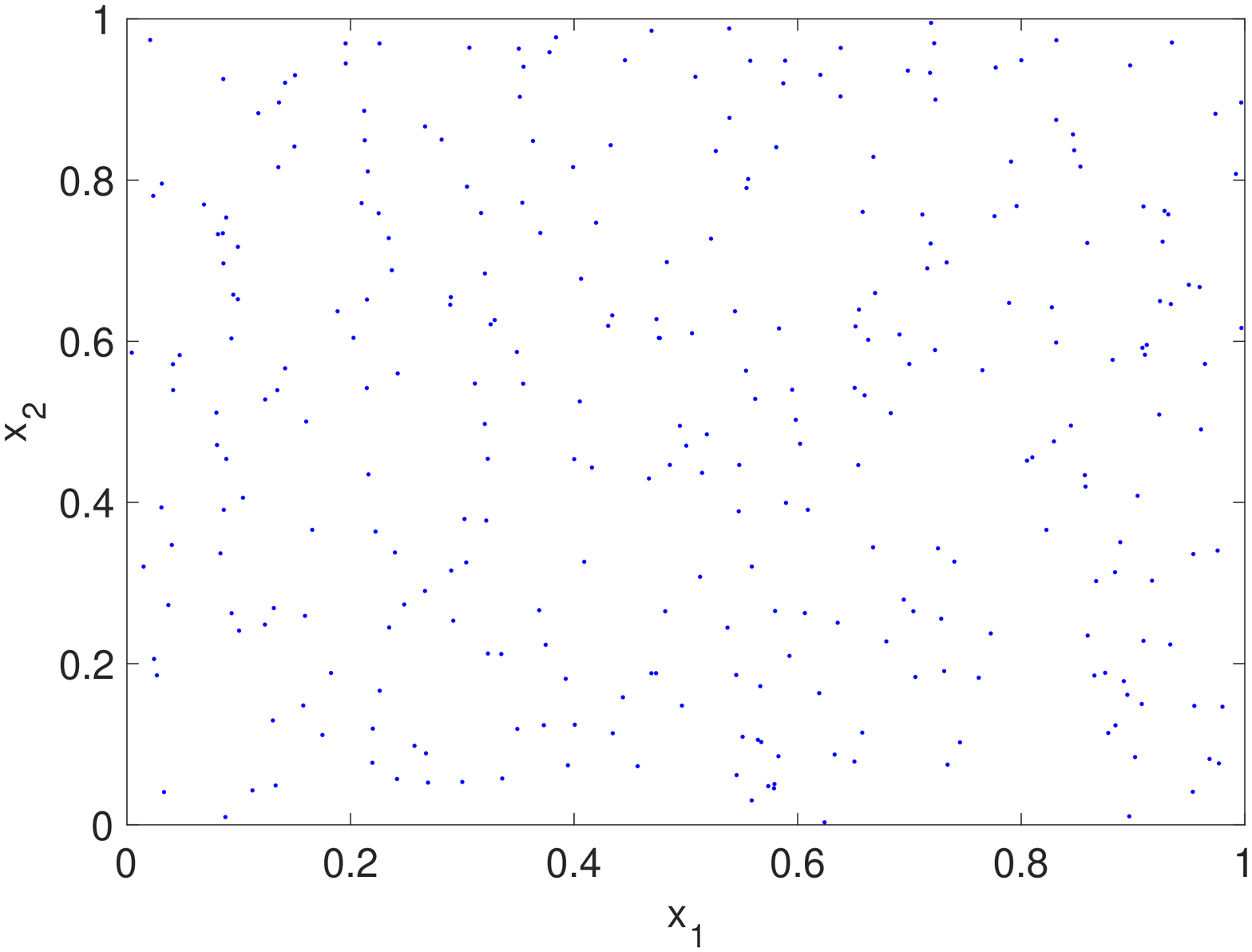}
    \end{subfigure}
    \begin{subfigure}[b]{0.22\textwidth}
        \centering
        \includegraphics[width=0.95\textwidth]{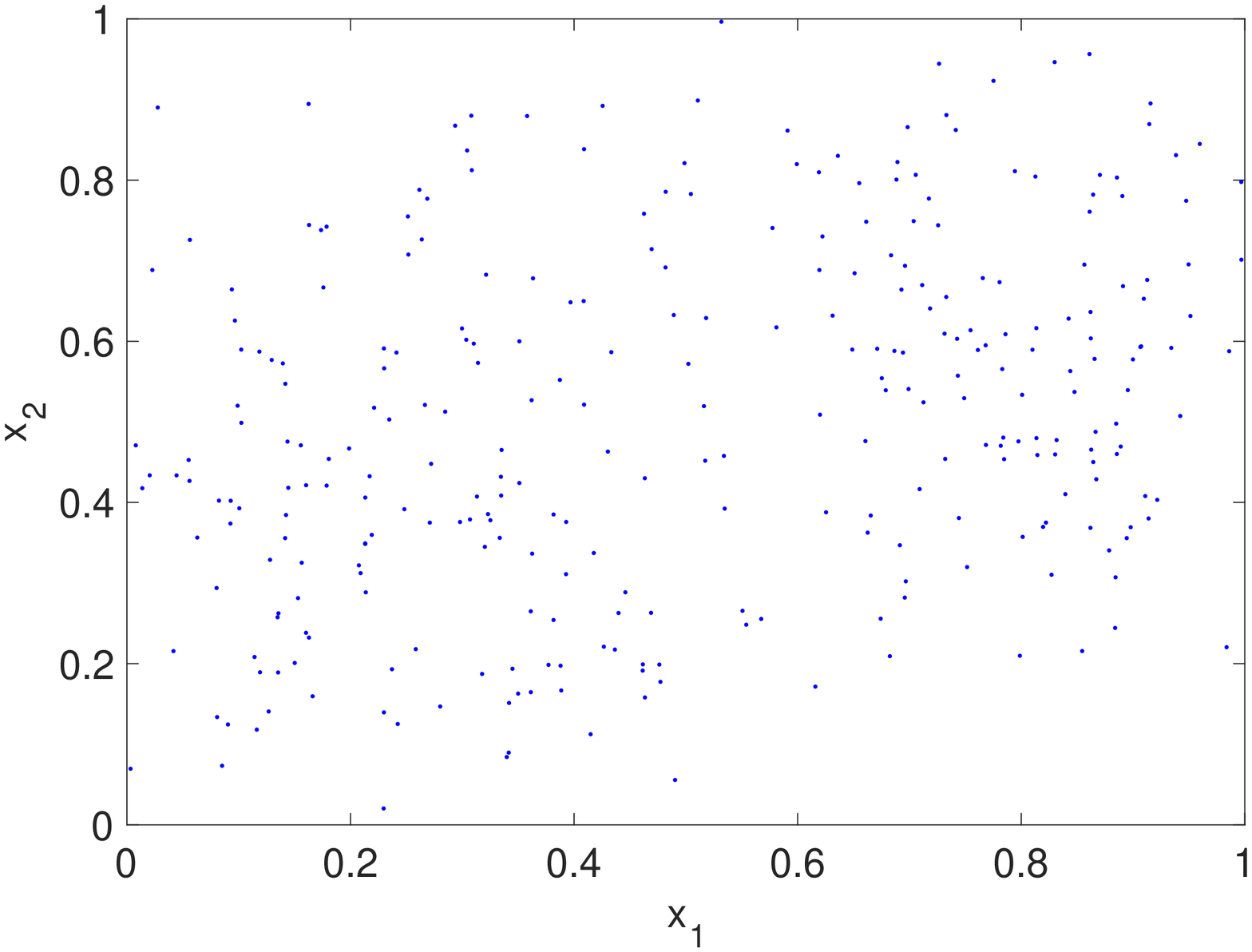}
    \end{subfigure}
    \begin{subfigure}[b]{0.22\textwidth}
        \centering
        \includegraphics[width=0.95\textwidth]{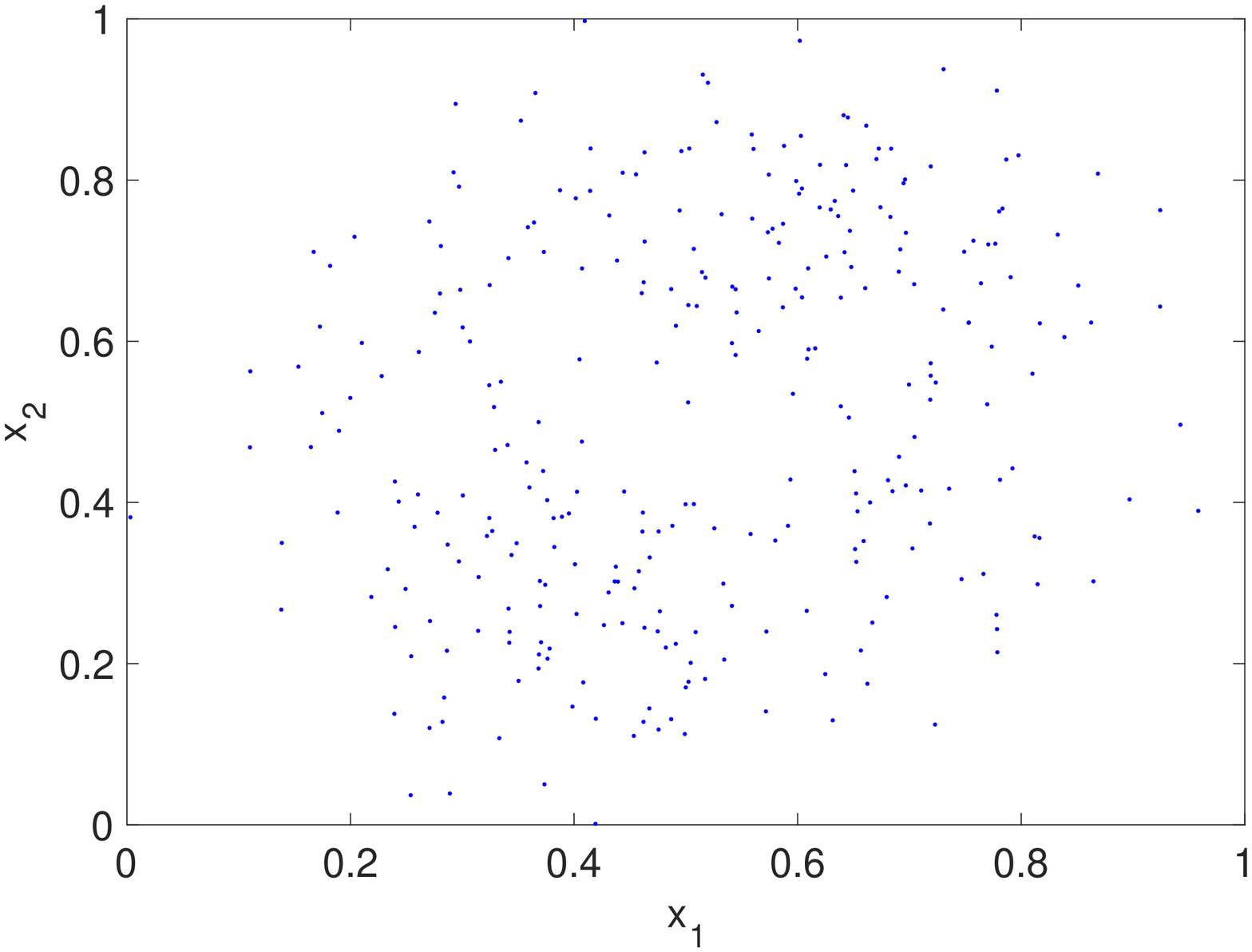}
    \end{subfigure}
    \begin{subfigure}[b]{0.22\textwidth}
        \centering
        \includegraphics[width=0.95\textwidth]{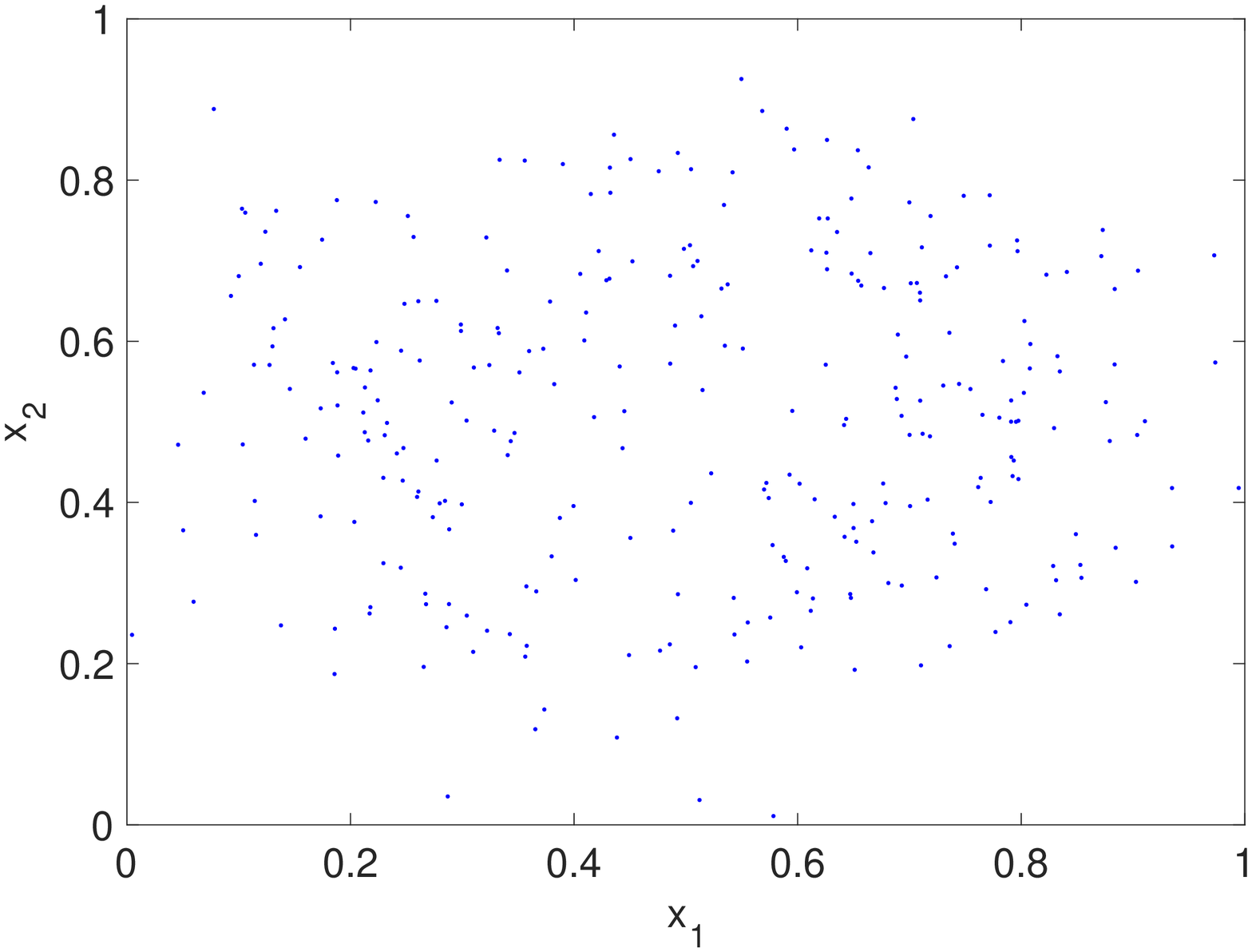}
    \end{subfigure}
    \vspace{3mm}

    \begin{subfigure}[b]{0.22\textwidth}
        \centering
        \includegraphics[width=0.95\textwidth]{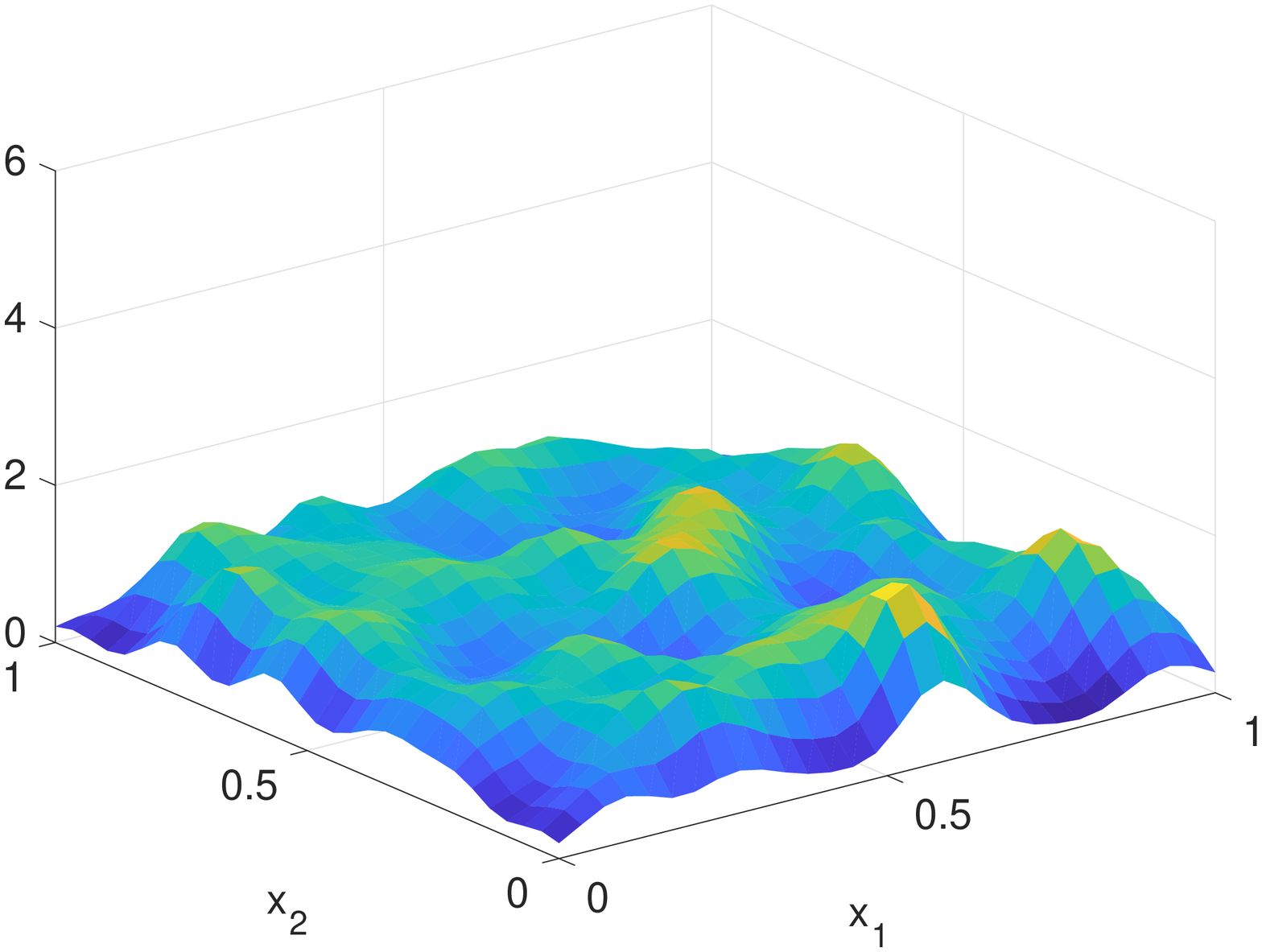}
    \end{subfigure}
    \begin{subfigure}[b]{0.22\textwidth}
        \centering
        \includegraphics[width=0.95\textwidth]{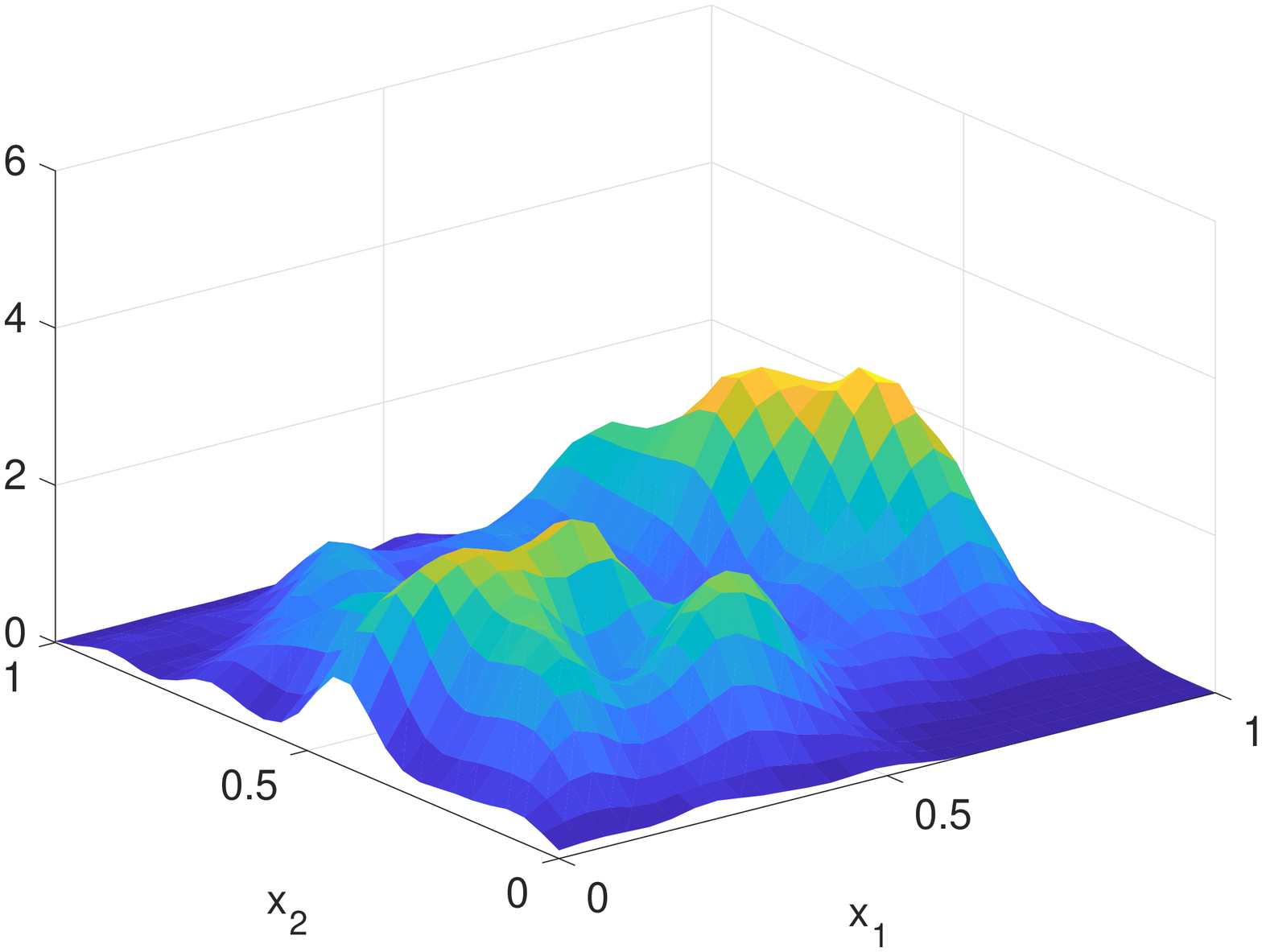}
    \end{subfigure}
    \begin{subfigure}[b]{0.22\textwidth}
        \centering
        \includegraphics[width=0.95\textwidth]{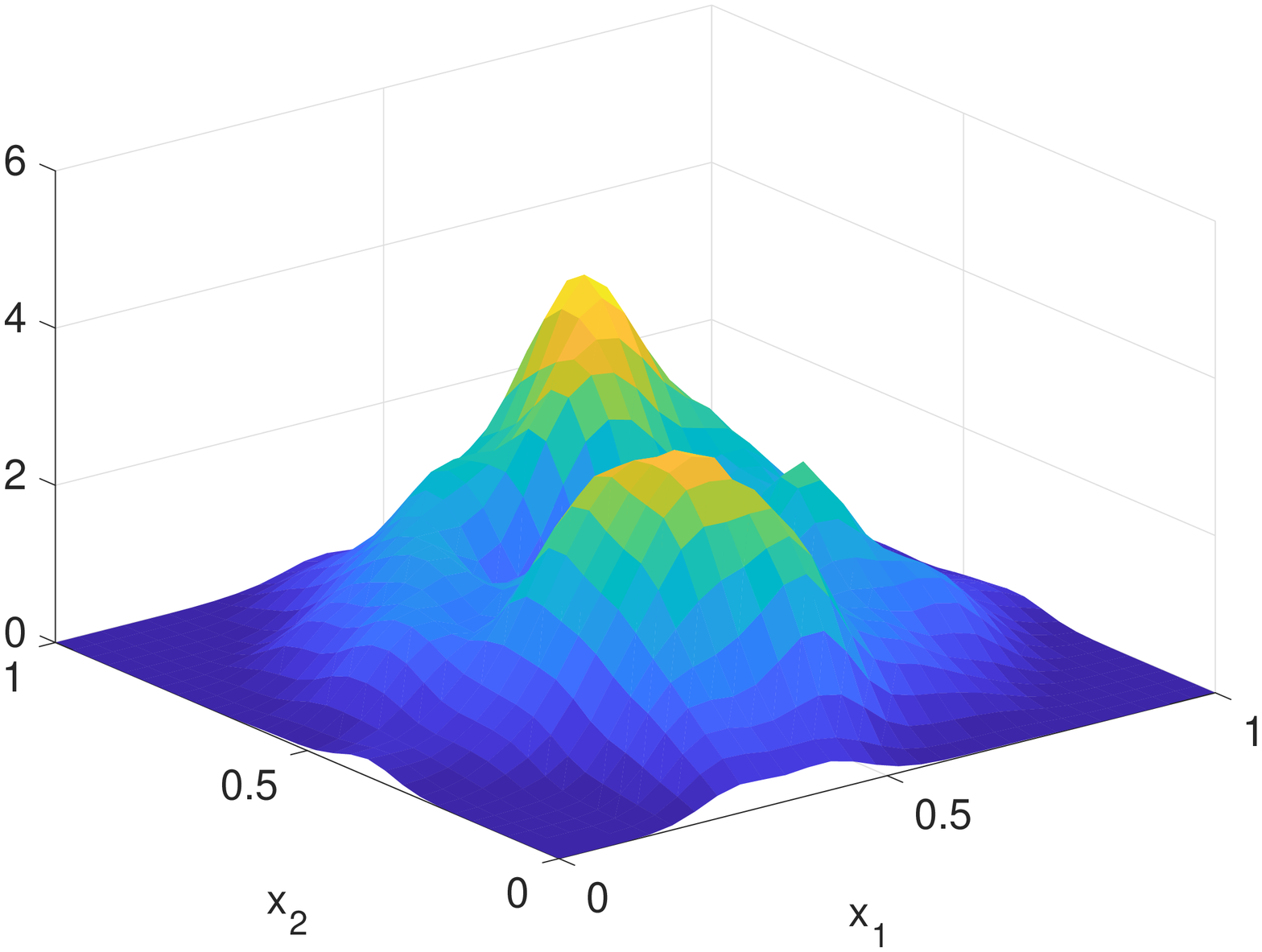}
    \end{subfigure}
    \begin{subfigure}[b]{0.22\textwidth}
        \centering
        \includegraphics[width=0.95\textwidth]{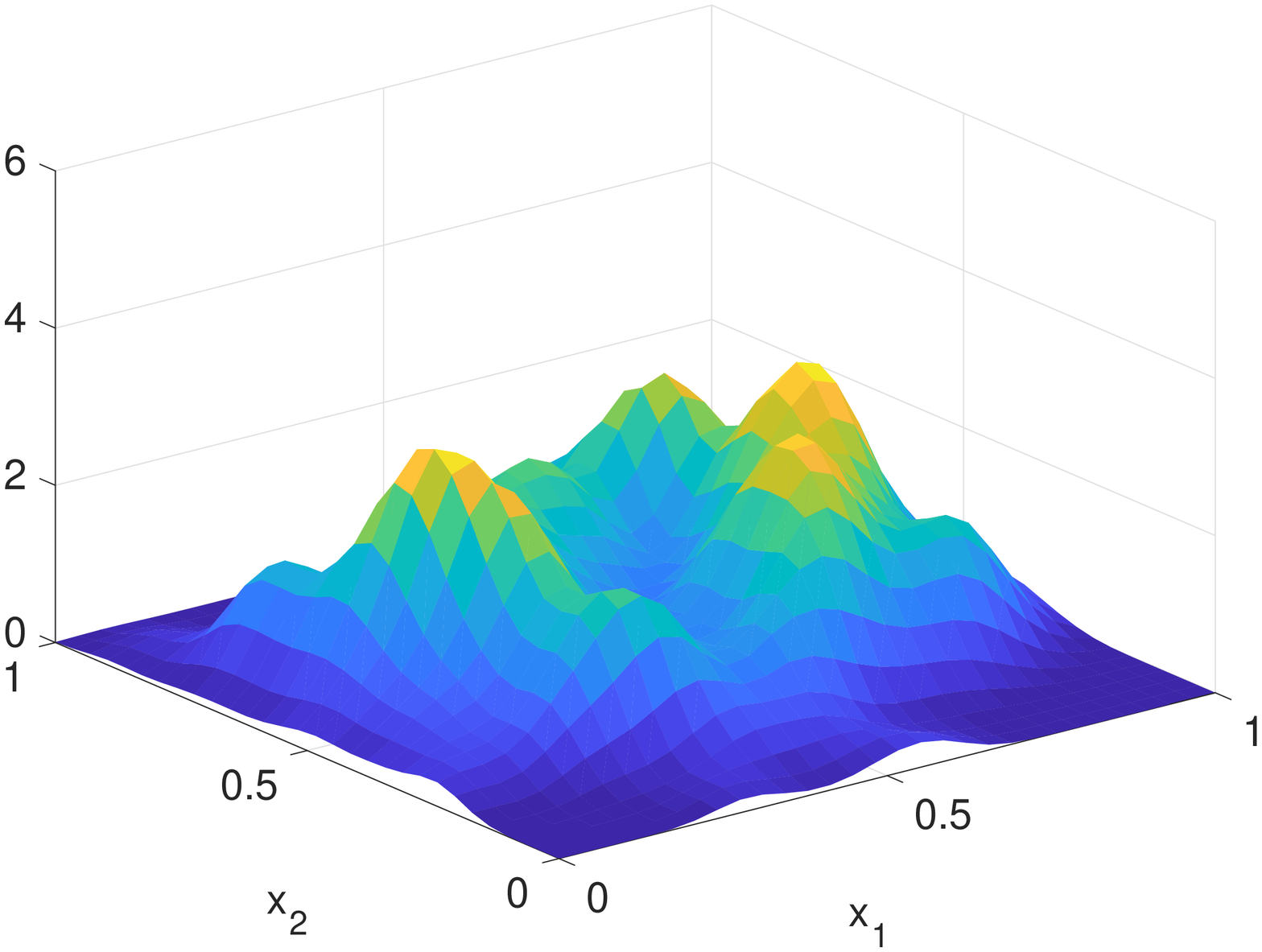}
    \end{subfigure}
    \vspace{3mm}

    \begin{subfigure}[b]{0.22\textwidth}
        \centering
        \includegraphics[width=0.95\textwidth]{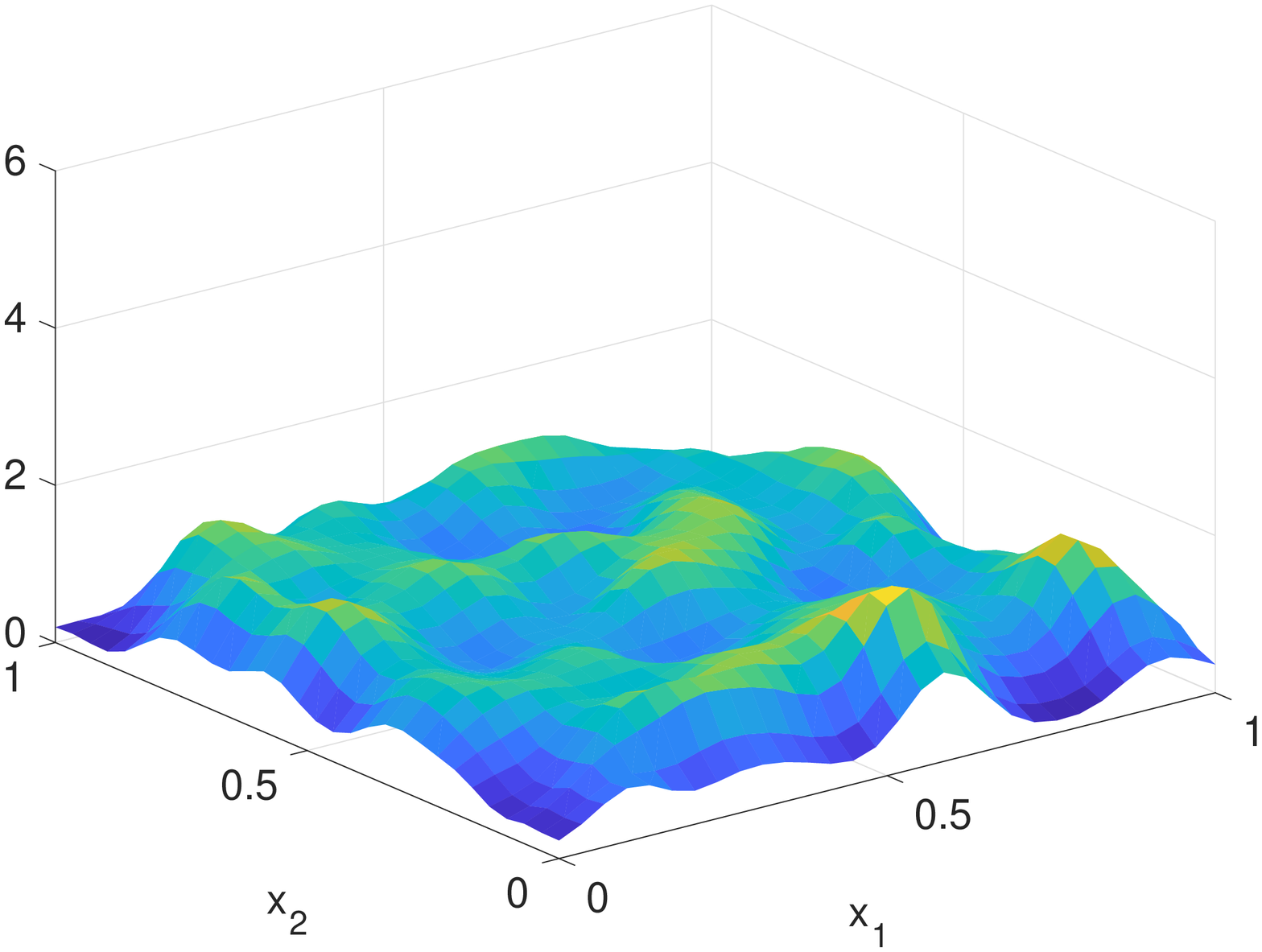}
    \end{subfigure}
    \begin{subfigure}[b]{0.22\textwidth}
        \centering
        \includegraphics[width=0.95\textwidth]{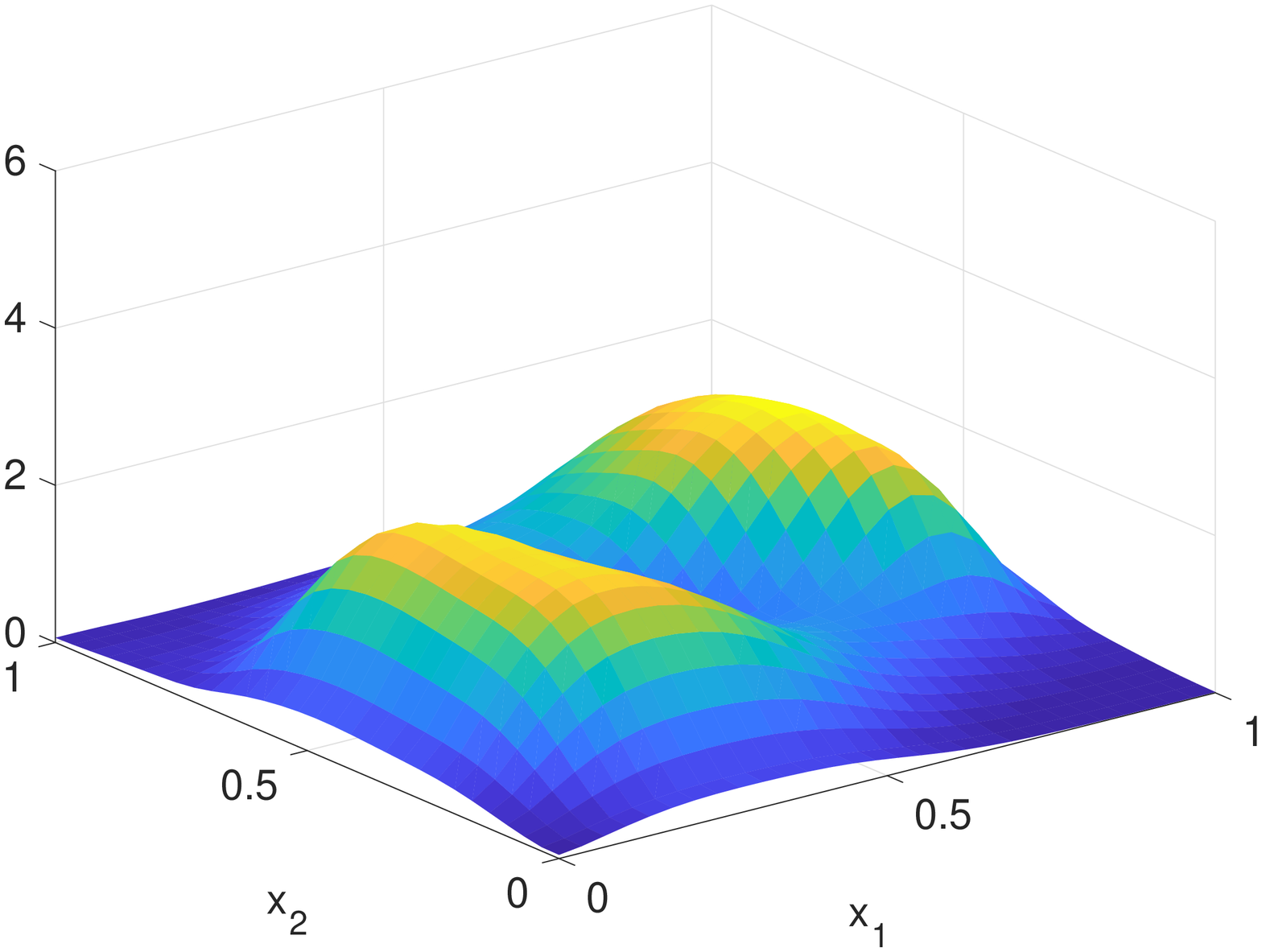}
    \end{subfigure}
    \begin{subfigure}[b]{0.22\textwidth}
        \centering
        \includegraphics[width=0.95\textwidth]{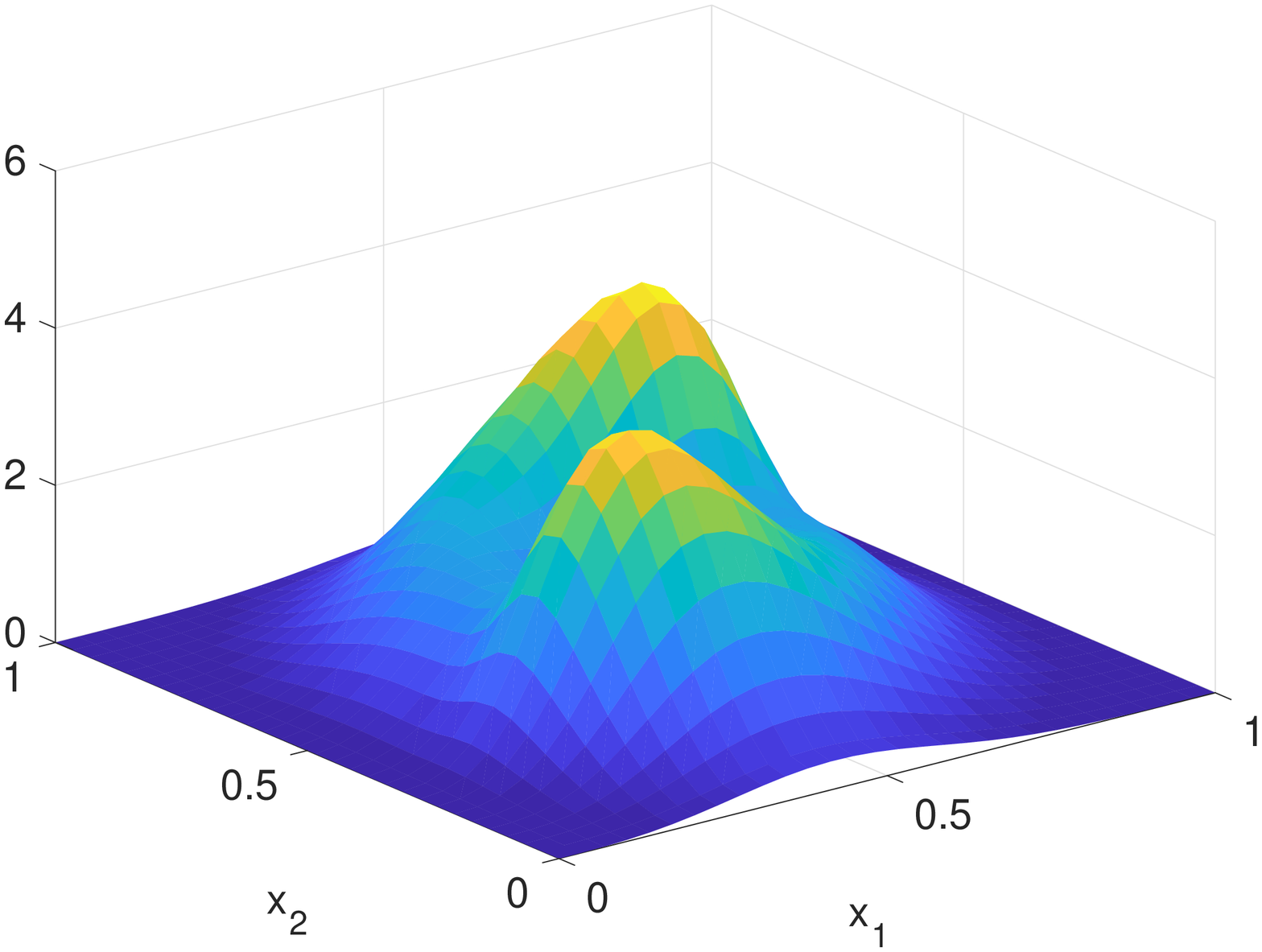}
    \end{subfigure}
    \begin{subfigure}[b]{0.22\textwidth}
        \centering
        \includegraphics[width=0.95\textwidth]{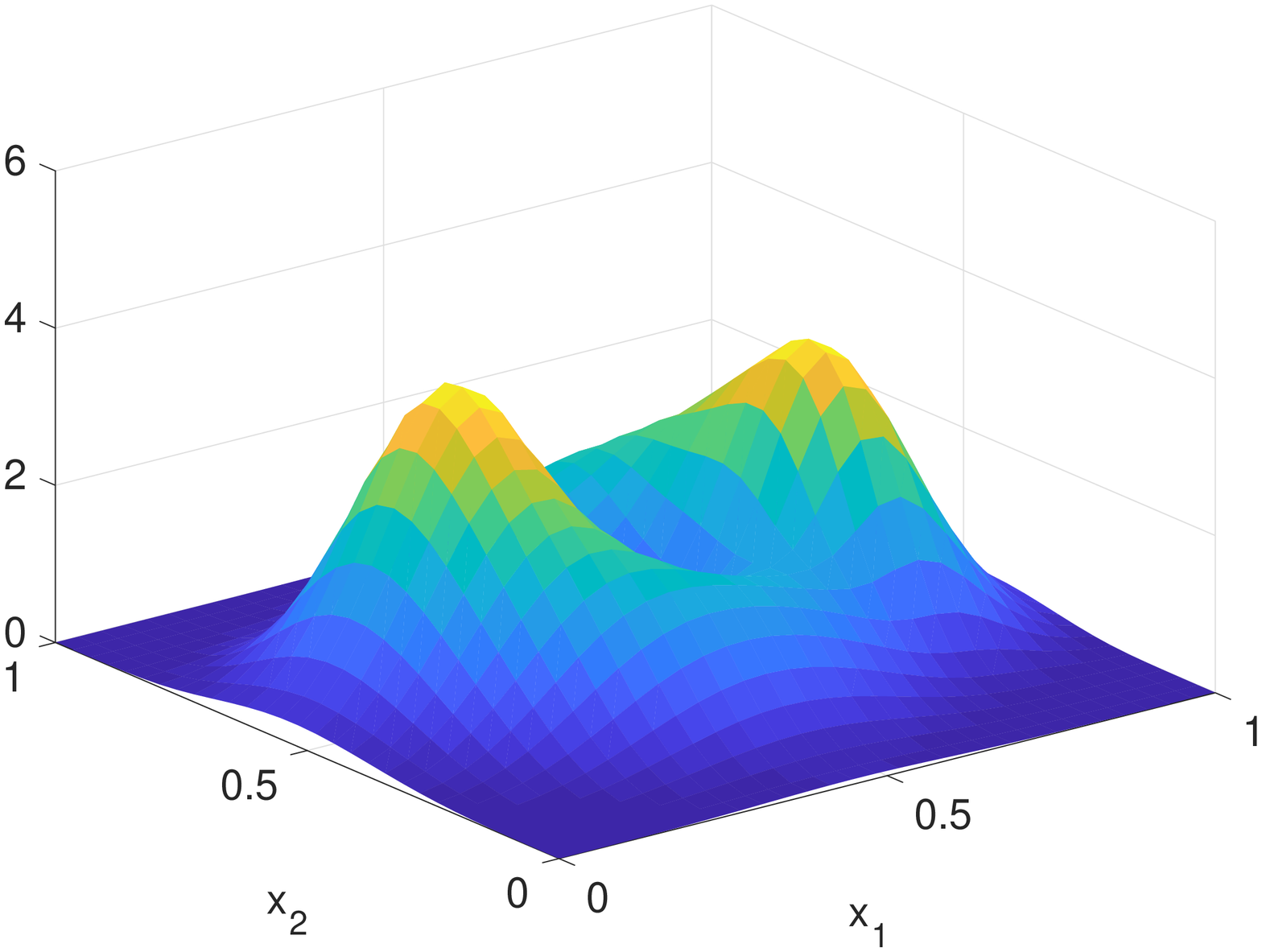}
    \end{subfigure}
    \vspace{3mm}
    
    \begin{subfigure}[b]{0.22\textwidth}
        \centering
        \includegraphics[width=0.95\textwidth]{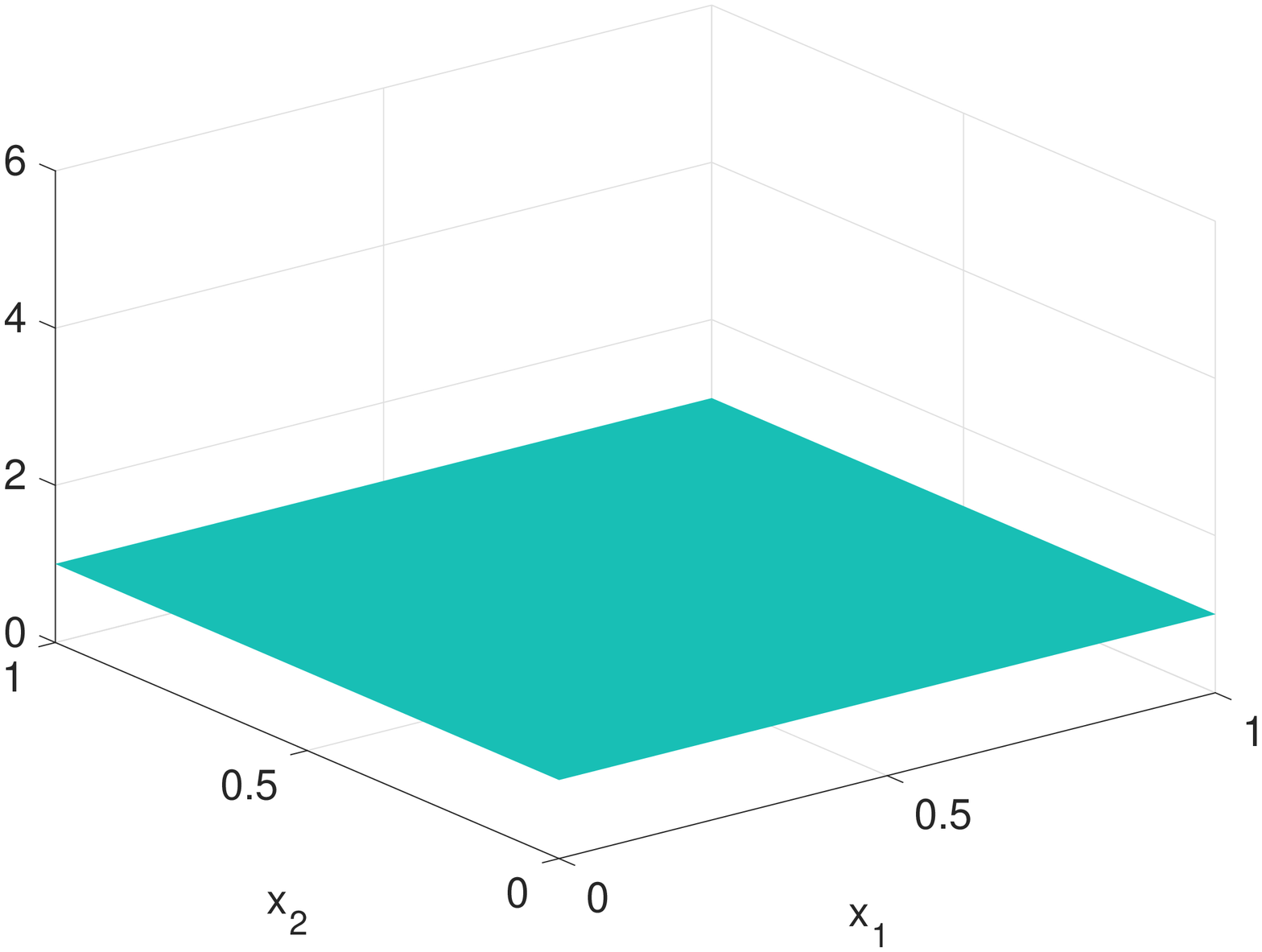}
    \end{subfigure}
    \begin{subfigure}[b]{0.22\textwidth}
        \centering
        \includegraphics[width=0.95\textwidth]{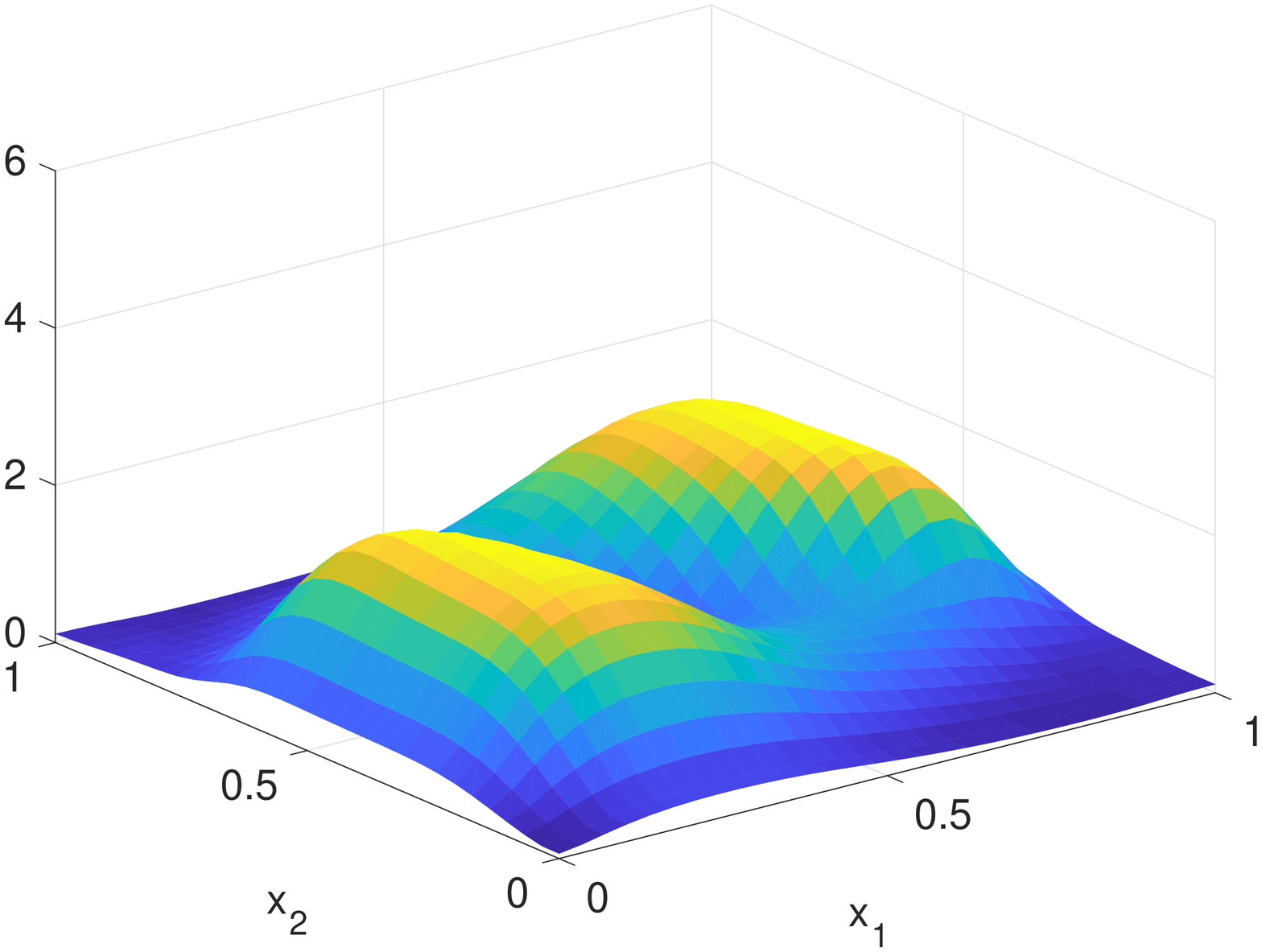}
    \end{subfigure}
    \begin{subfigure}[b]{0.22\textwidth}
        \centering
        \includegraphics[width=0.95\textwidth]{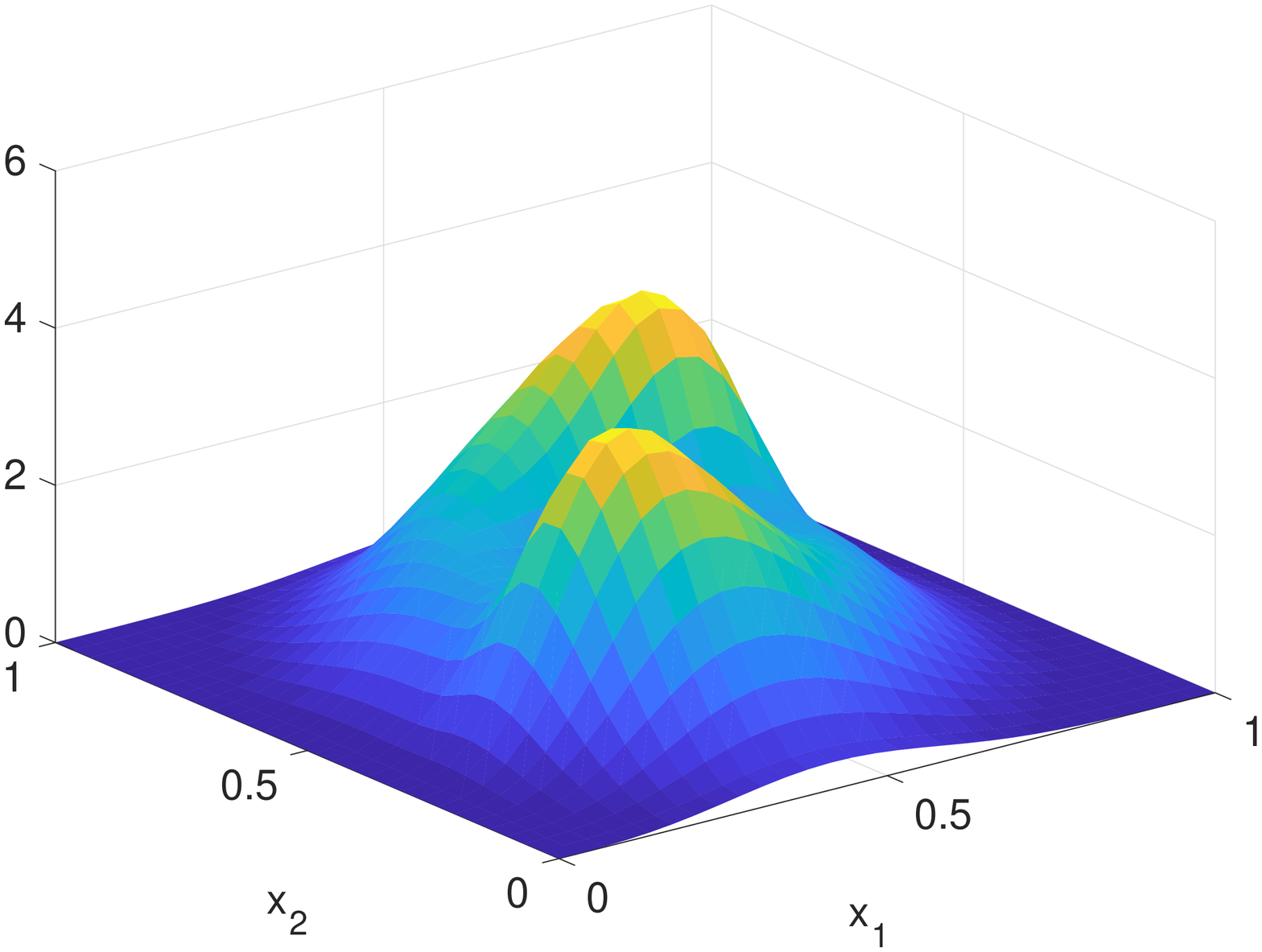}
    \end{subfigure}
    \begin{subfigure}[b]{0.22\textwidth}
        \centering
        \includegraphics[width=0.95\textwidth]{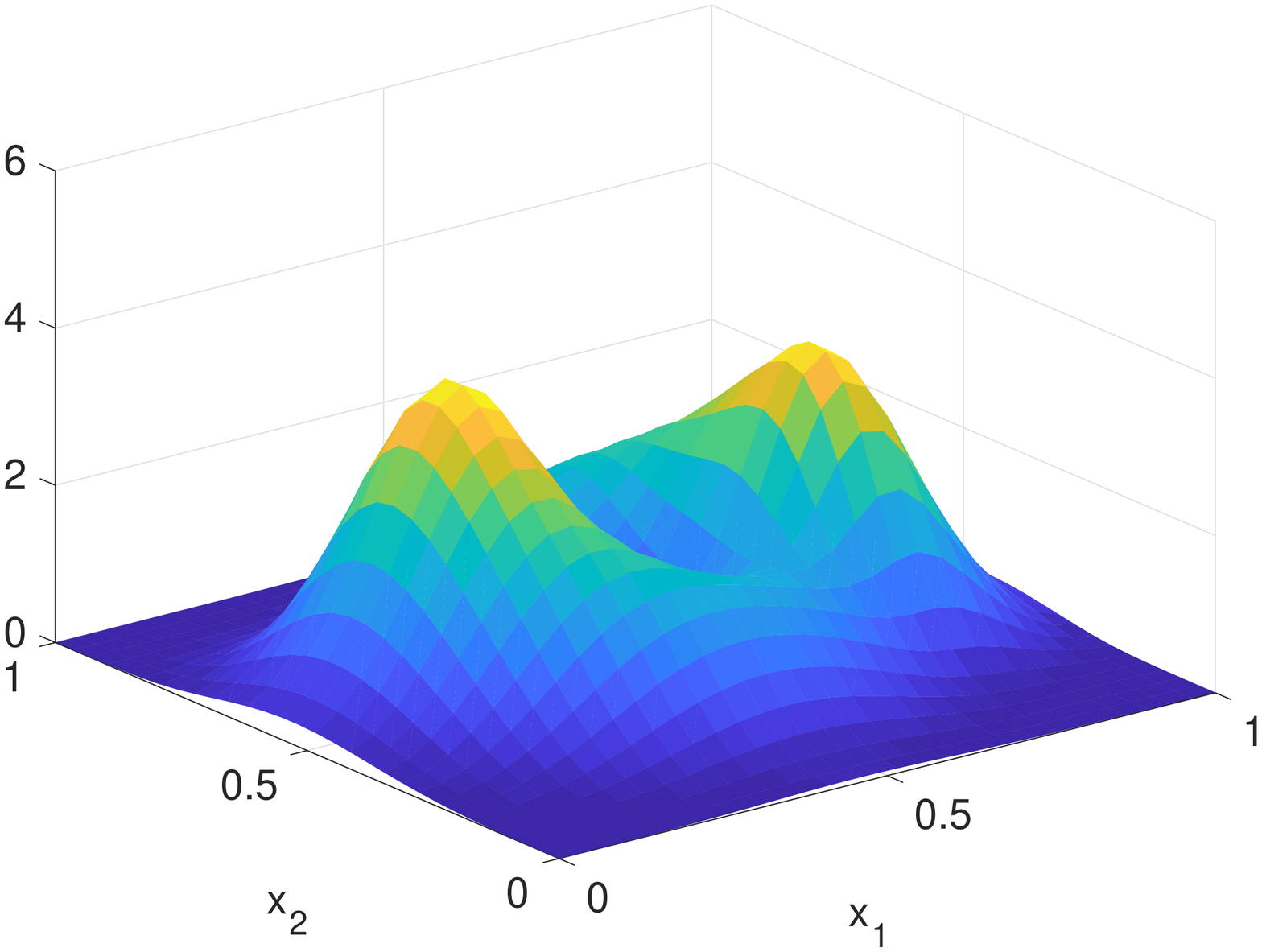}
    \end{subfigure}
    \caption{States of the large-scale agent systems (1st row), density estimates using KDE $p_{\text{KDE}}(x,t)$ (2nd row), outputs of the density filter $\Hat{p}(x,t)$ (3rd row) and the ground truth density $p(x,t)$ (4th row).}
    \label{fig:density filter}
\end{figure*}

\section{Conclusion} \label{section:conclusion}
We have presented a density filter for estimating the dynamic density of large-scale agent systems with known dynamics by combining KDE with infinite-dimensional Kalman filters. 
The density filter improved its estimation using the dynamics and the real-time states of the agents.
It was scalable to the population of agents and was proved to be convergent. 
All results held even if the agents' dynamics are nonlinear and time-varying. 
This algorithm can be used for many density-based optimization and control tasks of large-scale agent systems when density feedback information is required, and can be potentially extended to other dynamic density estimation problems when certain prior knowledge of its spatiotemporal evolution is available, such as population migration.
Our future work includes addressing the remaining problems noted in Remark \ref{remark:true covariance equation} and \ref{remark:uniform boundedness}, decentralizing the density filter and integrating it into density feedback control for large-scale agent systems.

\bibliographystyle{IEEEtran}
\bibliography{References}

\begin{thebibliography}{10}
\providecommand{\url}[1]{#1}
\csname url@samestyle\endcsname
\providecommand{\newblock}{\relax}
\providecommand{\bibinfo}[2]{#2}
\providecommand{\BIBentrySTDinterwordspacing}{\spaceskip=0pt\relax}
\providecommand{\BIBentryALTinterwordstretchfactor}{4}
\providecommand{\BIBentryALTinterwordspacing}{\spaceskip=\fontdimen2\font plus
\BIBentryALTinterwordstretchfactor\fontdimen3\font minus
  \fontdimen4\font\relax}
\providecommand{\BIBforeignlanguage}[2]{{%
\expandafter\ifx\csname l@#1\endcsname\relax
\typeout{** WARNING: IEEEtran.bst: No hyphenation pattern has been}%
\typeout{** loaded for the language `#1'. Using the pattern for}%
\typeout{** the default language instead.}%
\else
\language=\csname l@#1\endcsname
\fi
#2}}
\providecommand{\BIBdecl}{\relax}
\BIBdecl

\bibitem{foderaro2016distributed}
G.~Foderaro, P.~Zhu, H.~Wei, T.~A. Wettergren, and S.~Ferrari, ``Distributed
  optimal control of sensor networks for dynamic target tracking,'' \emph{IEEE
  Transactions on Control of Network Systems}, vol.~5, no.~1, pp. 142--153,
  2016.

\bibitem{silverman1986density}
B.~W. Silverman, \emph{Density Estimation for Statistics and Data
  Analysis}.\hskip 1em plus 0.5em minus 0.4em\relax CRC Press, 1986, vol.~26.

\bibitem{foderaro2012decentralized}
G.~Foderaro, S.~Ferrari, and M.~Zavlanos, ``A decentralized kernel density
  estimation approach to distributed robot path planning,'' in
  \emph{Proceedings of the Neural Information Processing Systems Conference,
  Lake Tahoe, NV}, 2012.

\bibitem{krishnan2018distributed}
V.~Krishnan and S.~Mart{\'\i}nez, ``Distributed control for spatial
  self-organization of multi-agent swarms,'' \emph{SIAM Journal on Control and
  Optimization}, vol.~56, no.~5, pp. 3642--3667, 2018.

\bibitem{de2018optimal}
M.~H. de~Badyn, U.~Eren, B.~A{\c{c}}ikme{\c{s}}e, and M.~Mesbahi, ``Optimal
  mass transport and kernel density estimation for state-dependent networked
  dynamic systems,'' in \emph{2018 IEEE Conference on Decision and Control
  (CDC)}.\hskip 1em plus 0.5em minus 0.4em\relax IEEE, 2018, pp. 1225--1230.

\bibitem{sain2002multivariate}
S.~R. Sain, ``Multivariate locally adaptive density estimation,''
  \emph{Computational Statistics \& Data Analysis}, vol.~39, no.~2, pp.
  165--186, 2002.

\bibitem{sheather1991reliable}
S.~J. Sheather and M.~C. Jones, ``A reliable data-based bandwidth selection
  method for kernel density estimation,'' \emph{Journal of the Royal
  Statistical Society: Series B (Methodological)}, vol.~53, no.~3, pp.
  683--690, 1991.

\bibitem{qahtan2016kde}
A.~Qahtan, S.~Wang, and X.~Zhang, ``Kde-track: An efficient dynamic density
  estimator for data streams,'' \emph{IEEE Transactions on Knowledge and Data
  Engineering}, vol.~29, no.~3, pp. 642--655, 2016.

\bibitem{julier2004unscented}
S.~J. Julier and J.~K. Uhlmann, ``Unscented filtering and nonlinear
  estimation,'' \emph{Proceedings of the IEEE}, vol.~92, no.~3, pp. 401--422,
  2004.

\bibitem{arasaratnam2009cubature}
I.~Arasaratnam and S.~Haykin, ``Cubature kalman filters,'' \emph{IEEE
  Transactions on automatic control}, vol.~54, no.~6, pp. 1254--1269, 2009.

\bibitem{chen2003bayesian}
Z.~Chen \emph{et~al.}, ``Bayesian filtering: From kalman filters to particle
  filters, and beyond,'' \emph{Statistics}, vol. 182, no.~1, pp. 1--69, 2003.

\bibitem{olfati2007distributed}
R.~Olfati-Saber, ``Distributed kalman filtering for sensor networks,'' in
  \emph{2007 46th IEEE Conference on Decision and Control}.\hskip 1em plus
  0.5em minus 0.4em\relax IEEE, 2007, pp. 5492--5498.

\bibitem{bandyopadhyay2014distributed}
S.~Bandyopadhyay and S.-J. Chung, ``Distributed estimation using bayesian
  consensus filtering,'' in \emph{2014 American control conference}.\hskip 1em
  plus 0.5em minus 0.4em\relax IEEE, 2014, pp. 634--641.

\bibitem{battistelli2014kullback}
G.~Battistelli and L.~Chisci, ``Kullback--leibler average, consensus on
  probability densities, and distributed state estimation with guaranteed
  stability,'' \emph{Automatica}, vol.~50, no.~3, pp. 707--718, 2014.

\bibitem{kalman1961new}
R.~E. Kalman and R.~S. Bucy, ``New results in linear filtering and prediction
  theory,'' \emph{Journal of basic engineering}, vol.~83, no.~1, pp. 95--108,
  1961.

\bibitem{falb1967infinite}
P.~L. Falb, ``Infinite-dimensional filtering: The kalman-bucy filter in hilbert
  space,'' \emph{Information and Control}, vol.~11, no. 1-2, pp. 102--137,
  1967.

\bibitem{bensoussan1971filtrage}
A.~Bensoussan, \emph{Filtrage optimal des syst{\`e}mes lin{\'e}aires}.\hskip
  1em plus 0.5em minus 0.4em\relax Dunod, 1971.

\bibitem{da2014stochastic}
G.~Da~Prato and J.~Zabczyk, \emph{Stochastic equations in infinite
  dimensions}.\hskip 1em plus 0.5em minus 0.4em\relax Cambridge university
  press, 2014.

\bibitem{sontag1995characterizations}
E.~D. Sontag and Y.~Wang, ``On characterizations of the input-to-state
  stability property,'' \emph{Systems \& Control Letters}, vol.~24, no.~5, pp.
  351--359, 1995.

\bibitem{dashkovskiy2013input}
S.~Dashkovskiy and A.~Mironchenko, ``Input-to-state stability of
  infinite-dimensional control systems,'' \emph{Mathematics of Control,
  Signals, and Systems}, vol.~25, no.~1, pp. 1--35, 2013.

\bibitem{pavliotis2014stochastic}
G.~A. Pavliotis, \emph{Stochastic processes and applications: diffusion
  processes, the Fokker-Planck and Langevin equations}.\hskip 1em plus 0.5em
  minus 0.4em\relax Springer, 2014, vol.~60.

\bibitem{cacoullos1966estimation}
T.~Cacoullos, ``Estimation of a multivariate density,'' \emph{Annals of the
  Institute of Statistical Mathematics}, vol.~18, no.~1, pp. 179--189, 1966.

\bibitem{chang1970practical}
J.~Chang and G.~Cooper, ``A practical difference scheme for fokker-planck
  equations,'' \emph{Journal of Computational Physics}, vol.~6, no.~1, pp.
  1--16, 1970.

\end{thebibliography}

\end{document}